\def\ninept{\def\baselinestretch{1.5}}
\def\ninept{\def\baselinestretch{1.5}}
\newtheorem{definition}{{\bf \em Definition}}[section]
\newtheorem{remark}{{\bf \em Remark}}[section]
\newtheorem{theorem}{{\bf \em Theorem}}[section]
\newtheorem{lemma}{{\bf \em Lemma}}[section]
\newtheorem{proposition}{{\bf \em Proposition}}[section]
\newcommand{\eqdef} {\mbox{$\:\stackrel{\triangle}{=}\:$}}
\newcommand{\bx}{{\mathbf{x}}}
\newcommand{\bX}{{\mathbf{X}}}
\newcommand{\by}{{\mathbf{y}}}
\newcommand{\bu}{{\mathbf{u}}}
\newcommand{\bY}{{\mathbf{Y}}}
\newcommand{\bU}{{\mathbf{U}}}
\newcommand{\tpi}{{\tilde{\pi}}}
\newcommand{\ep}{{\epsilon}}
\newcommand{\cE}{{\mathcal{E}}}
\newcommand{\bbR}{{\mathbb{R}}}
\newcommand{\cC}{{\mathcal{C}}}
\newcommand{\cR}{{\mathcal{R}}}
\newcommand{\cW}{{\mathcal{W}}}
\newcommand{\hW}{{\hat{W}}}
\newcommand{\cU}{{\mathcal{U}}}
\newcommand{\cX}{{\mathcal{X}}}
\newcommand{\cY}{{\mathcal{Y}}}
\newcommand{\bpi}{{\bar{\pi}}}
\newcommand{\be}{\begin{equation}}
\newcommand{\ee}{\end{equation}}
\newcommand{\bea}{\begin{eqnarray}}
\newcommand{\eea}{\end{eqnarray}}
\newcommand{\bean}{\begin{eqnarray*}}
\newcommand{\eean}{\end{eqnarray*}}
\newcommand{\ben}{\begin{enumerate}}
\newcommand{\een}{\end{enumerate}}
\newcommand{\qed}{\hspace*{\fill}%
    \vbox{\hrule\hbox{\vrule\squarebox{.667em}\vrule}\hrule}\smallskip}
    \def\squarebox#1{\hbox to #1{\hfill\vbox to #1{\vfill}}}
\begin{document}

\title{Information Theoretic Analysis of the Fundamental Limits of Content Identification}

\author{Sait Tun\c{c}, Y\"{u}cel Altu\u{g}, Suleyman S. Kozat, {\em Senior Member}, IEEE, M. Kivanc Mihcak, {\em Member}, IEEE \thanks{S. Tun\c{c} is with the Booth School of Business at University of Chicago, Chicago, IL, Y. Altu\u{g} is with the Electrical and Computer Engineering Department at Cornell University, Ithaca, NY, S. S. Kozat is with the Electrical and Electronic Engineering Department at Bilkent University, Ankara, Turkey, M. K. M{\i}h\c{c}ak is with the Coordinated Science Laboratory at University of Illinois, Urbana-Champaign, Urbana, IL.}}

\maketitle

\begin{abstract}

We investigate the content identification problem from an information
theoretic perspective and derive its fundamental limits.  Here, a
rights-holder company desires to keep track of illegal uses of its
commercial content, by utilizing resources of a security company,
while securing the privacy of its content.  Due to privacy issues, the
rights-holder company only reveals certain hash values of the original
content to the security company. We view the commercial content of the
rights-holder company as the codebook of an encoder and the hash
values of the content (made available to the security company) as the
codebook of a decoder, i.e., the corresponding codebooks of the
encoder and the decoder are not the same.  Hence, the
content identification is modelled as a communication problem
using asymmetric codebooks by an encoder and a decoder. We further
address ``the privacy issue'' in the content identification by adding
``security'' constraints to the communication setup to prevent
estimation of the encoder codewords given the decoder codewords. By
this modeling, the proposed problem of reliable communication with
asymmetric codebooks with security constraints provides the
fundamental limits of the content identification problem. To this end,
we introduce an information capacity and prove that this capacity is
equal to the operation capacity of the system under i.i.d.  encoder
codewords providing the fundamental limits for content
identification. As a well known and widely studied framework, we
evaluate the capacity for a binary symmetric channel and provide 
closed form expressions.

\end{abstract}

\begin{keywords}
Content identification, asymmetric codebooks, robust signal hashing, side information.
\end{keywords}
\begin{center} \vspace{-0.1in}
\bfseries EDICS Category:  INF-CONF, ADP-PMOD, 	ADP-PPRO, MMH-BENM-PER \vspace{-0.1in}
\end{center}

\section{Introduction}
\label{sec:intro}

In recent years, the ``content identification'' problem has attracted
a growing interest from the signal processing community due to its
potential usage as a filtering technique for file and multimedia
sharing \cite{Moulin10, venka00, kozat04, mihcak01, ward11, varna11, 
jang10, beekhof10, haitsma02, joly05}. Currently,
several video sharing sites, including Youtube, Google Video and
Dailymotion, planted content identification technology in order to
allow copyright holders to identify and disable illegally uploaded
versions of their content in real-time. In these content identification
applications, comparison of the whole file is naturally inefficient
and in some cases impossible due to computational complexity. However,
instead of matching the whole content, using robust hashing methods
enables real-time content identification possible, where short fingerprints
extracted from the content are matched 
\cite{Moulin10, ward11, lee08, baluja07}. To this end, this paper particularly focuses on the
information-theoretic analysis of the fundamental limits of content
identification by modeling the content identification as a communication problem
using asymmetric codebooks under certain security constraints.

In the most generic content identification framework, one seeks to
find an efficient method to perform an ``anti-piracy search'' via side
information at the receiver side. In this framework, a ``rights-holder
company'', i.e., a company that possesses a commercially valuable
signal such as a video, an audio or a document, seeks to identify illegal
uses of its commercial content. However, due to the lack of necessary
infrastructure, the rights-holder company is required to employ 
resources of another company, say a ``security company'', in order to
perform the content identification. As a consequence, the rights-holder
company forms a collaboration with the security company to carry out
the illegal content search. However due to obvious privacy issues,
instead of revealing the whole private content, the rights-holder
company only provides certain hash values, extracted from the original
content, to the security company. The security company needs to
perform content identification only with the help of these hash values,
i.e., side information, derived from the private content revealed by the
rights-holder company.

In this paper, we adapt an information-theoretic approach to the
content identification problem and derive the fundamental limits by
modeling content identification as a communications problem. In this
framework, we view the content owned by the rights-holder company as
the codebook of an encoder. The illegal uploading of the content,
which possibly includes noise, corresponds to the message transmission
stage of the noisy communication channel. The hash values of the
content made available to the security company, i.e., the side
information, correspond to the codebook of the decoder. Hence, the
content identification problem is modelled as a communication problem
in which the encoder and the decoder are communicating with each
other, i.e., the encoder seeks to send a message to the decoder, while
the encoder would like to maintain a reliable communication with the
encoder. Note that due to security requirements, the encoder does not
reveal its codebook to the decoder. Instead, the encoder shares a
perturbed version of its codebook with the decoder. Hence the
corresponding codebooks of the encoder and the decoder are not the
same, i.e., there is an asymmetry between these codebooks. Therefore
the content identification is modelled as a communications problem
with a noisy channel, where there is an asymmetry between the
codebooks of the encoder and the decoder. We further address “the
privacy issue” in the content identification problem by adding
security constraints to the communication setup to prevent the
estimation of the encoder codewords given the decoder codewords. By
this modeling, the proposed problem of reliable communication with
asymmetric codebooks with security constraints provides the
fundamental limits of the content identification problem. Under this
framework, we derive and characterize the maximum achievable rate of
reliable communication. We further evaluate our results for a binary
symmetric case, where the encoder codebook is binary, the perturbation
between the codebooks of the encoder and the decoder is a binary
symmetric distribution and the communication channel is a binary
symmetric channel.

In particular, to provide the fundamental limits on content
identification with security constraints, we study a point-to-point
communication problem, where the communication channel employs
asymmetric codebooks, i.e., the encoder and the decoder codebooks are
not the same. We concentrate on the case where the decoder's codebook
is a perturbed version of the encoder codebook and optimize the
reliable communication rate over the joint statistical distribution of
the codebooks of the encoder and decoder. Thus, we consider the
statistical characterizations of both the encoder codebook and the
perturbation and carry out optimization in the general case. In this
sense, we generalize the original point-to-point communication setup
proposed by Shannon \cite{shan48} by introducing reliable communication using
asymmetric codebooks.

Problems related to the capacity of communication channels with side
information is heavily investigated in the information theory
literature \cite{shan58,wyn75,wyn76,costa83}. However, the introduced
asymmetric codebook nature of our problem with certain security
constraints significantly differentiates the content identification
setup from that of the generic side information related problems. Note
that the generic side information related problems studied in the
literature usually includes a common codebook shared by both the
encoder and the decoder \cite{shan58,wyn75}. On the contrary, here,
due to the nature of content identification application, the codebooks are
asymmetric. Furthermore in the generic communications problems with side information,
either the transmitter or the receiver has access to side information,
e.g., information about certain system parameters or the noise, which
is not available to the other. However, in our setup, the system
parameters, which correspond to the statistical characterization of
the elements of the system, are available for both the encoder and the
decoder. Moreover, while a similar communication problem using
asymmetric codebooks is studied in \cite{Alt08}, the statistical
characterization of the perturbation between the codebooks of the
encoder and the decoder is assumed to be fixed, which reduces the
generality of this setup. Hence the privacy of the valuable content is
not guaranteed while the rate of reliable communication is optimized
in \cite{Alt08}, unlike this paper. This paper derives and
characterizes the maximum achievable rate of reliable communication
while maintaining  ``security'' of the valuable content after
introducing  ``security'' conditions from an estimation theoretic
perspective.

In this paper, we first characterize the fundamental limits of the described
content identification application by deriving the maximum rate of
error-free information transfer of the prescribed communication
setup. We consider the case where the codewords of encoder are drawn
identically and independently from a discrete and finite set. We
further assume that the communication channel between the encoder and
the decoder and the statistical characterization of the perturbation
between the encoder and the decoder codebooks, i.e., the signal
hashing in content identification, are memoryless. We then provide the
maximum achievable rate of reliable communication, which is shown to be
the maximum of mutual information between the codeword of the decoder
and the output of the channel, where the maximization carried over a
set of joint distribution of the encoder's codeword and the decoder's
codeword satisfying certain security constraints. Furthermore, we also
evaluate the capacity for binary symmetric setup, i.e., the alphabet
where the encoder codewords are drown is binary, the perturbation
between the codebooks is a binary symmetric distribution and the
communication channel is a binary symmetric channel.

We begin with the notation and the problem description in
Section~\ref{sec:notation-problem-statement}.  In
Section~\ref{sec:general-results}, we derive and characterize the
corresponding maximum achievable rate of reliable
communications. Then, in Section \ref{ssec:achievability} and Section
\ref{ssec:converse}, we provide proofs for the forward and the
converse statements, respectively. In Section~\ref{sec:binary}, we
analyze the capacity of the binary symmetric case and provide a closed
form expression. The paper concludes with discussions in
Section~\ref{sec:conclusions}.

\section{Notation and Problem Setup}
\label{sec:notation-problem-statement}

\subsection{Notation}
\label{ssec:notation}

Boldface letters and regular letters with subscripts denote vectors and
individual elements of vectors, respectively. Furthermore, capital letters
and lowercase letters denote random variables and individual
realizations of the corresponding random variable, respectively. The vector
$\left[ a_1, a_2, \ldots , a_n \right]^T$ is denoted
by $\mathbf{a}^n$. The abbreviations ``i.i.d.'', ``p.m.f.'',
and ``w.l.o.g.'' are shorthands for the terms
``independent identically distributed'', ``probability mass
function'',  and ``without loss
of generality'', respectively.
The entropy function of a discrete random variable $X$ is denoted by
$H \left( X \right) = - \sum_{x \in \cX} p \left( x \right) \log p \left( x \right)$
where $X$ is defined on the alphabet $\cX$ with the corresponding p.m.f.
$p \left( x \right)$\footnote{Unless otherwise stated, all the logarithms
are base-$2$.}. Similarly, $H \left( X , Y \right)$, $H \left( X | Y \right)$, $I \left( X ; Y \right)$
denote the joint entropy of $X$ and $Y$, conditional entropy of $X$ given $Y$,
and the mutual information between $X$ and $Y$ for discrete random variables $X$ and $Y$, respectively.

\subsection{Problem Setup and Relevant Definitions}
\label{ssec:problem-statement}

We consider a content identification problem, (1) where a
rights-holder company desires to keep track of the illegal uses of its
commercial content (2) by utilizing the resources of a security
company (3) while securing the privacy of the content. The illegal
uses of the content can be broadcasting, uploading or publishing the
original content or a slightly disturbed version of it without proper
consent of the rights-holder company.  Due to privacy issues, the
rights-holder company only reveals certain hash values of the original
content to the security company. In this sense, the rights-holder
company wants to ensure the privacy of the content while providing
sufficient side information, i.e., hash values (which makes the
anti-piracy search feasible), on the original content to the security
company.

We view the commercial content of the rights-holder company as the
codebook of an encoder and the hash values of the content (made
available to the security company) as the codebook of a decoder. Hence
the corresponding codebooks of the encoder and the decoder are not the
same, i.e., there is an asymmetry between these codebooks due to the
described nature of the content identification problem. Furthermore we
model the illegal uploading or broadcasting the content as the message
transmission phase of the communication framework and the
identification of the illegal content as the decoding the output of
the corresponding communication setup. Therefore the content
identification problem is modelled by a communication framework with a
noisy channel, where there is an asymmetry between the codebooks of
the encoder and the decoder. We further address ``the privacy issue''
in the content identification problem by adding security constraints
to the communication setup to prevent the estimation of the encoder
codewords given the decoder codewords. By this modeling, the proposed
problem of reliable communication with asymmetric codebooks with
security constraints constitutes the fundamental limits of the content
identification problem. To this end, we first introduce the
corresponding discrete memoryless communication channel setup and
define the related error events. We then rigorously characterize the
security constraints of this communication setup to address the
privacy issue in content identification by introducing security
related definitions from an estimation theoretic perspective.

We first provide the necessary channel code and related error events
to quantify the fundamental rates of reliable communications with
asymmetric codebooks, which in turn reveals the fundamental limits of the content
identification problem. A broad definition of {\em asymmetric channel
  codes}, which constitutes the fundamental part of reliable
communications with asymmetric codebooks, will be provided.  Such a
communication system, which is depicted in Fig.~\ref{fig:setup},
consists of two components: a \emph{discrete-memoryless communication
  channel} (DMCC) denoted by $\left( \cX, p\left( y| x \right), \cY
\right)$ (with single letter input alphabet $\cX$, single letter
output alphabet $\cY$, single letter transition probability $p \left(
y | x \right)$, cf. \cite{cover}, p.~193) and a
\emph{$\left(2^{nR},n\right)$ asymmetric channel code}
\footnote{Throughout the paper, for the sake of convenience, we assume that $2^{nR} \in {\mathbb Z}^+$ for all
$R \in {\mathbb R}^+ \cup \left\{ 0 \right\}$ and for any $n \in {\mathbb Z}^+$.} (cf. Def.~\ref{def:ACC}).

\begin{figure}
\centering \epsfxsize 6in
 \epsfbox{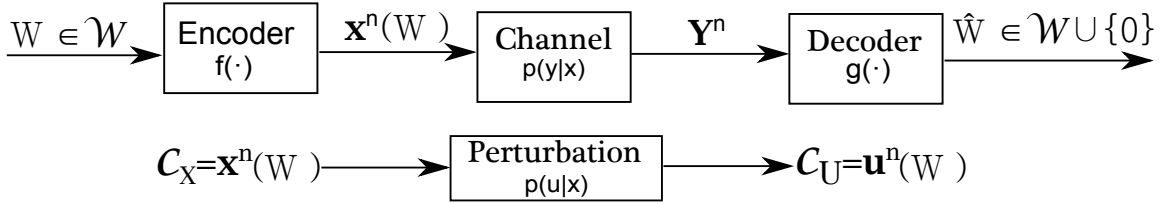}
 \caption{\small Schematic diagram of a discrete memoryless channel with asymmetric codebooks.}
\label{fig:setup}
\end{figure}

\begin{definition}
Given {\em discrete finite} alphabets $\cX$, $\cU$, $\cY$,
a \emph{$\left( 2^{nR},n\right)$ asymmetric channel code (ACC)},
denoted by $\left( \cW , f^n , h^n , g^n \right)$ consists of:  a message set, $\cW \eqdef \left\{ 1, \ldots,
2^{nR}\right\}$; a deterministic encoder codebook generator function, $f^n \, : \, \cW \rightarrow \cX^n$,
where $f^n \left( w \right) = \bx^n \left( w \right)$ for all $w \in \cW$; a deterministic decoder codebook generator function,
$h^n \, : \, \cW \rightarrow \cU^n$, where $h^n \left( w \right) = \bu^n \left( w \right)$ for all $w \in \cW$; and a
deterministic decoding function, $g^n \, : \, \cY^n \rightarrow \cW \cup \{ 0 \}$, which assigns a decision
(denoted by $\hW$) to every received sequence $\by^n \in \cY^n$, where the decision of ``null'' is denoted by $0$.
\label{def:ACC}
\end{definition}

\begin{remark}
In the considered setup, $W \in \cW$ represents the message, which is uniformly distributed over the discrete finite
set $\cW$. Following the standard information-theoretic notation, we use the shorthands
$\bX^n := \bx^n \left( W \right)$, $\bU^n := \bu^n \left( W \right)$.  Here, $\bY^n \in \cY^n$ denotes the output
 of the DMCC $\left( \cX, p\left( y| x \right), \cY \right)$  when the input is $\bX^n$. Note that for the related
content identification problem, the encoder codebook $\bX^n$ corresponds to the commercial content and transmission of
the message over the DMCC $\left( \cX, p\left( y| x \right), \cY \right)$ corresponds to illegally uploading or
broadcasting the private content (possibly with some disturbance). Furthermore the statistical characterization of the
perturbation, which corresponds to the conditional p.m.f. $p\left( u| x \right)$, corresponds to the robust signal
hashing.
\end{remark}

The related error events will be shortly stated here to make the setup complete. Given an ACC code $\left( \cW, f^n, g^n, h^n \right)$ and the DMCC $\left( \cX, p\left( y| x \right), \cY \right)$,
the conditional probability of error, $\lambda_w$, conditioned on the transmitted message $W = w$ is given by
\begin{align*}
\lambda_w \eqdef \Pr \left( g \left( \bY^n \right) \neq w \, | \, W = w \right),
\end{align*}
where the probability is computed over the DMCC $\left( \cX, p\left( y| x \right), \cY \right)$. Maximal probability of error,
$\lambda^{\left( n \right)}$, is defined as, $\lambda^{(n)} = \max_{w \in \cW} \lambda_w$. Finally, the average probability of
error, $P_e^{\left( n \right)}$, is given by
\begin{align*}
P_e^{(n)} & \eqdef \Pr \left( \hat{W} \neq W \right) \\
 & = \sum_{w \in \cW} \Pr \left( g \left( \bY^n \right)
\neq w \, | \, W = w \right) \Pr \left( W = w \right) = 2^{-nR} \sum_{w \in \cW} \lambda_w.
\end{align*}

We point out that due to the nature of the problem setup, generation of the decoder codebook and generation of the channel output are
two independent events, given the encoder codebook. In fact, unlike ``broadcast-channel-like'' setups \cite{cover}, these two events does
not need to happen simultaneously. Hence for the random variables $X \in \cX$, $Y \in \cY$ and $U \in \cU$ with conditional
p.m.f.s $p\left(y | x\right)$ and $p \left(u | x \right)$, we have
$p\left(y , u | x\right) = p\left(y | x\right)p\left(u | x\right)$, which implies that $U \leftrightarrow X \leftrightarrow Y$
forms a Markov chain in the specified order.

The mismatch between deterministic codebook generation functions of encoder and decoder, namely $f \left( . \right)$ and
$h \left( . \right)$ in the respected order, constitutes the asymmetric nature of the problem. Hence, in accordance with
the problem definition, the functions $f \left( . \right)$ and $h \left( . \right)$ are only known by encoder and decoder,
respectively.

Finally, we define the necessary security conditions for the communication setup stated in Def.~\ref{def:ACC} (and shown in  Fig.~\ref{fig:setup})
to address the privacy issue in the content identification problem. For a given ACC
$\left( \cW, f^n, h^n, g^n \right)$,  we introduce the notion of ``security''
from an estimation theoretic perspective \cite{poor}. The asymmetric codebook generator
functions $f^n$ and $h^n$ are treated to be more ``secure'' if it is ``harder''
to estimate  the encoder codewords $\left\{ \bx^n \left( W \right) \right\}$
given the decoder codewords $\left\{ \bu^n \left( W \right) \right\}$. Note that
this approach is philosophically analogous to the desired ``approximate one-way'' property
of the robust hash functions. In order to achieve this task, we introduce the notion
of a ``codebook estimator function'' and a distortion metric that quantifies its performance.

For a given ACC $\left( \cW, f^n, h^n, g^n \right)$, a codebook estimator
function aims to estimate encoder codeword(s) given the corresponding decoder codeword(s) as:
\begin{itemize}
\item a {\em single-letter codebook estimator function} $\pi$ operates on individual
codeword elements such that  $\pi \, : \, \cU \rightarrow \cX$.
\item an {\em $n$-fold codebook estimator function} $\pi^n$ is a mapping such that
$\pi^n \, : \, \cU^n \rightarrow \cX^n$.
\end{itemize}

The performance of codebook estimation is quantified via the following:
\begin{itemize}
\item a {\em single-letter estimator distortion function} is a mapping
\[
d \, : \, \cX \times \cX \rightarrow \cR_d
\]
where $\cR_d \subseteq {\mathbb R}^+ \cup \left\{ 0 \right\}$;
the estimator distortion $d \left( \pi \left( u \right) , x \right)$ is
a measure of the cost of estimating $x$ by $\pi \left( u \right)$.
\item an {\em $n$-fold estimator distortion function} is a mapping
\[
d_n \, : \, \cX^n \times \cX^n \rightarrow \cR_d
\]
where $d_n \left( \hat{\bx}^n , \bx^n \right) \eqdef \frac{1}{n} \sum_{i=1}^n d \left( \hat{x}_i , x_i \right)$,
$\hat{\bx} \eqdef \pi^n \left( \bu^n \right)$.
\end{itemize}

An estimator distortion function is said to be ``bounded'' if the set of its values is bounded, i.e., there exists a
$D \in \mathbb R$ such that $\max_{\hat{x} \in \cX ,x \in \cX} d \left( \hat{x} , x \right) < D$. Given the codebook
estimator function $\pi^n$ and estimator distortion function $d_n$, we next define the $\alpha$-secure ACC codes which
satisfies certain security constraints.

\begin{definition}
An  ACC $\left( \cW, f^n, h^n, g^n \right)$ is said to be {\em $\alpha$-secure} if
\begin{align}
\min_{ \pi^n \, : \, \cU^n \rightarrow \cX^n} \mbox{E} \left[ d_n \left( \pi^n \left( \bU^n \right) , \bX^n \right) \right]  \geq \alpha ,
\label{eq:alpha-security}
\end{align}
where the expectation is with respect to the joint probability distribution of $\bX^n$ and $\bU^n$:
\[
\mbox{E} \left[ d_n \left( \pi^n \left( \bU^n \right) , \bX^n \right) \right]
=
\sum_{\bu^n \in \cU^n} \sum_{\bx^n \in \cX^n}
p \left( \bu^n , \bx^n \right) d_n \left( \pi^n \left( \bu^n \right) , \bx^n \right) .
\]
\label{def:alpha-secure}
\end{definition}

We emphasize that the minimization in \eqref{eq:alpha-security} is carried over the set of all possible $n$-fold estimator functions $\pi^n$ which has no restrictions, i.e., the estimator does not need to be sequential. The comprehensive approach in the Definition~\ref{def:alpha-secure} enables sustaining the security regardless of the type of the attack in the content identification problem \cite{poor}. Further note that the minimum in \eqref{eq:alpha-security} always exists since both the range and the domain
of $\pi_n$ are discrete, i.e., finite. Also, for all $\alpha_1 \leq \alpha_2$, an $\alpha_2$-secure ACC
is also $\alpha_1$-secure.

In this paper, we optimize the reliable communication rate of the system defined in Def.~\ref{def:ACC}, by sustaining
the security of the encoder codebook in an estimation theoretic perspective. Since the optimization is carried over the joint
p.m.f. $p \left(x , u \right)$, existence of the two generating functions $f \left( . \right)$ and $h \left( . \right)$
to maintain the predefined objectives of the problem is proved in this paper. Also as a side note, the statistical perturbation
between the encoder codebook $\cC_X$ and decoder codebook $\cC_U$, modeled
by the conditional p.m.f. $p \left(u | x \right)$, is assumed to be memoryless in this paper. Hence we study the memoryless
ACCs that are rigorously defined in the following.

\begin{definition}
Given a distribution $p \left( u , x \right)$ (with the domain $\cU \times \cX$),
a $\left( 2^{nR}, n \right)$  ACC $\left( \cW, f^n, h^n, g^n \right)$ is said to be {\em memoryless} if
$f^n$ and $h^n$  carry out {\em i.i.d. random codebook generation}
according to it. Specifically, both $f^n$ (resp. $h^n$) generate $2^{nR}$ codewords
$\left\{ \bx^n \left( w \right) \right\}_{w=1}^{2^{nR}}$ (resp. $\left\{ \bu^n \left( w \right) \right\}_{w=1}^{2^{nR}}$)
(each of which is of length-$n$) such that
\[
p \left(
\bu^n \left( 1 \right), \bu^n \left( 2 \right) , \ldots , \bu^n \left( 2^{nR} \right)  ,
\bx^n \left( 1 \right), \bx^n \left( 2 \right) , \ldots , \bx^n \left( 2^{nR} \right) \right)
=
\prod_{w=1}^{2^{nR}} \prod_{i=1}^n p \left( u_i \left( w \right) , x_i \left( w \right) \right).
\]
Such codes are termed as {\em memoryless asymmetric channel codes (MACC)}.
\label{def:MACC}
\end{definition}

We next derive the maximum reliable communication rate of the MACCs corresponding to the DMCC given in Fig.~\ref{fig:setup}
under certain security constraints to determine the fundamental limits of content identification.

\section{Memoryless Asymmetric Channel Codes Under Security Constraints - Capacity Results}
\label{sec:general-results}

In this section, we analyze and optimize the reliable communication rate of the communication system stated in
Section~\ref{ssec:problem-statement} under certain security constraints. We first introduce the maximum reliable
communication rate and the related definitions over memoryless asymmetric channel codes. We then give our fundamental
result in Theorem~\ref{thm:main-theorem} stating the channel coding results for the aforementioned setup, i.e.,
the maximum rate of error-free information transmission satisfying certain security constraints, which in
turn reveals the fundamental limits of the content identification problem. We then provide the achievability proof
(cf. Section~\ref{ssec:achievability}) and the converse proof (cf. Section~\ref{ssec:converse}) of the
Theorem~\ref{thm:main-theorem}.

\subsection{Capacity of MACC under Security Constraints}
\label{ssec:general-capacity}

In this section, we introduce the capacity, i.e., the maximum reliable communication rate,
of MACC under certain security constraints. Before giving the capacity definition
and the channel coding theorem, we need to define ``achievability'' notion, i.e., what
we mean when we say that a communication rate $R$ is achievable with a security
constraint $\alpha$. A pair $\left( R , \alpha \right)$ is said to be {\em achievable}
if there exists a sequence of $\left( 2^{nR} , n \right)$ MACC
$\left( \cW, f^n, h^n, g^n \right)$ such that
\begin{align}
\lim_{n \rightarrow \infty} P_e^{(n)} = \lim_{n \rightarrow \infty} \Pr \left[ W \neq g \left( \bY^n \right) \right] = 0 \,\,\,\,\,\,\,\,\mathrm{and}\,\,\,\,\,\,\,\,
\lim_{n \rightarrow \infty} \min_{\pi^n \, : \, \cU^n \rightarrow \cX^N}
\mbox{E} \left[ d \left( \pi^n \left( \bU^n \right) , \bX^n \right) \right] \geq \alpha.
\label{eq:achievable-security}
\end{align}

The {\em secure MACC region} is defined as the closure of all achievable $\left( R , \alpha \right)$ points and for any given
$\alpha$, the {\em secure MACC capacity} $C \left( \alpha \right)$ is the supremum of rates $R$ such that
$\left( R , \alpha \right)$ is in the secure MACC region. We next define a mathematical function of the MACC, which we call the 
{\em information secure MACC capacity}. Then, we state the main result of this paper by proving that the information 
secure MACC capacity is equal to the secure MACC capacity. Before we define the information secure MACC capacity, 
we first provide the definition of a feasible set of joint p.m.f.s $p \left( u , x \right)$, denoted by $P_\alpha$, which 
satisfies a certain security constraint. We then give the definition of the secure MACC capacity as the maximum
of the mutual information between the decoder's codeword and the communication channel output, where the maximization
is carried over $P_\alpha$.

\begin{definition}
For a given $\alpha$, the set of all joint distributions $p \left( u , x \right)$ that ensures the MACC $\left( \cW, f^n, h^n, g^n \right)$ to be $\alpha$-secure is given by
\begin{align}
P_\alpha \eqdef \left\{ p \left( u , x \right) \, \Big| \, \min_{\pi \, : \, \cU \rightarrow \cX}
\mbox{E} \left[ d \left( \pi \left( U \right) , X \right) \right] \geq \alpha \right\}.
\label{eq:p-alpha-def}
\end{align}
\label{def:p-alpha}
\end{definition}

We next define the information secure MACC capacity that employs a maximization over the set $P_\alpha$. We then introduce a theorem, which states that the proposed
information secure MACC capacity is equal to the secure MACC capacity, and constitutes the main result of this paper by providing the fundamental limit on the content identification problem.
\begin{definition}
For any given DMCC $\left( \cX, p\left( y| x \right), \cY \right)$, bounded distortion function
$d \left(\hat{x},x \right)$ and $\alpha$, if $P_\alpha$ (defined via (\ref{eq:p-alpha-def})) is non-empty, then
the {\em information secure MACC capacity} is defined as
\begin{align}
C^{(I)} \left( \alpha \right) \eqdef \max_{p \left( u , x \right) \in P_\alpha} I \left( U ; Y \right) .
\label{eq:information-capacity}
\end{align}
\label{def:information-capacity}
\end{definition}

We next show that the set $P_\alpha$ given in \eqref{eq:p-alpha-def} satisfies certain properties in the following proposition. We emphasize that by Proposition~\ref{prop:p-alpha-compactness}, the capacity function given in \eqref{eq:information-capacity} becomes definite, hence
this completes the definition of the information secure MACC capacity.

\begin{proposition}
For a given $\alpha$, the feasible set $P_\alpha$ satisfies the following properties,
\begin{itemize}
\item If $\alpha$ is such that $P_\alpha$ is the empty set, then $C \left( \alpha \right) = 0$.
\item If $\alpha$ is such that $P_\alpha$ is non-empty, then it is compact.
\end{itemize}
\label{prop:p-alpha-compactness}
\end{proposition}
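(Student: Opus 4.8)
The plan is to prove the two bullets of Proposition~\ref{prop:p-alpha-compactness} separately, the first being essentially a definitional observation and the second requiring a genuine (if standard) topological argument.

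For the first bullet, suppose $P_\alpha = \varnothing$. Then there is no joint p.m.f.\ $p(u,x)$ for which the single-letter security constraint $\min_{\pi}\mbox{E}[d(\pi(U),X)] \geq \alpha$ holds. I would argue that this forces the $n$-fold constraint in \eqref{eq:alpha-security} to fail for every $n$ and every choice of memoryless codebook generators: since a MACC generates its codewords i.i.d.\ according to some $p(u,x)$, and since for an i.i.d.\ source the optimal $n$-fold estimator $\pi^n$ reduces to applying the optimal single-letter estimator $\pi$ coordinatewise (the expected $n$-fold distortion is the average of $n$ identical single-letter terms, each minimized independently), we get $\min_{\pi^n}\mbox{E}[d_n(\pi^n(\bU^n),\bX^n)] = \min_{\pi}\mbox{E}[d(\pi(U),X)] < \alpha$. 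Hence no $(R,\alpha)$ with $R \geq 0$ is achievable in the sense of \eqref{eq:achievable-security}, so the secure MACC capacity is $C(\alpha) = 0$ (the supremum over an empty set of admissible rates, or equivalently only the degenerate $R=0$ with a vacuous code survives). I expect this to be routine; the only point to state carefully is the reduction of the $n$-fold estimator minimization to the single-letter one under i.i.d.\ generation.

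For the second bullet, assume $P_\alpha \neq \varnothing$; I want to show $P_\alpha$ is compact as a subset of the probability simplex on $\cU \times \cX$, which is a subset of the finite-dimensional Euclidean space $\mathbb{R}^{|\cU||\cX|}$. By the Heine--Borel theorem it suffices to show $P_\alpha$ is closed and bounded. Boundedness is immediate since $P_\alpha$ is contained in the probability simplex, which is bounded. For closedness, I would consider the map $\varphi : p(u,x) \mapsto \min_{\pi : \cU \to \cX}\mbox{E}[d(\pi(U),X)]$ and show it is continuous; then $P_\alpha = \varphi^{-1}([\alpha,\infty))$ is the preimage of a closed set under a continuous map, hence closed. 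To see $\varphi$ is continuous: there are only finitely many functions $\pi : \cU \to \cX$ (since both alphabets are finite), and for each fixed $\pi$ the quantity $\mbox{E}[d(\pi(U),X)] = \sum_{u,x} p(u,x)\, d(\pi(u),x)$ is a linear — hence continuous — function of the vector $p(u,x)$ (here boundedness of $d$ guarantees the coefficients are finite). Therefore $\varphi$, being the pointwise minimum of finitely many continuous functions, is continuous. Combining, $P_\alpha$ is a closed and bounded subset of a finite-dimensional space, hence compact.

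The main obstacle, such as it is, is not the topology but making the first bullet airtight: one must be careful that the definition of achievability in \eqref{eq:achievable-security} is tied to \emph{memoryless} codes (MACCs), so that the relevant security quantity genuinely collapses to the single-letter expression in \eqref{eq:p-alpha-def}, and hence $P_\alpha = \varnothing$ really does rule out all achievable pairs rather than merely the ``i.i.d.-generated'' ones. Beyond that, the argument is a direct application of Heine--Borel together with continuity of a finite minimum of linear functionals, and I would not expect any surprises; I would simply remark at the end that this is exactly what makes the maximum in \eqref{eq:information-capacity} well-defined, since $I(U;Y)$ is a continuous function of $p(u,x)$ (the channel $p(y|x)$ being fixed) and a continuous function on a nonempty compact set attains its maximum.
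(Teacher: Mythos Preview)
Your proposal is correct. For the first bullet you argue essentially as the paper does: both reduce the $n$-fold security constraint to the single-letter one via separability of the optimal estimator under i.i.d.\ generation (the paper invokes Lemma~\ref{lem:seperable-estimation} and phrases the argument by contradiction, but the logical content is the same as your direct version).

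For the second bullet your route differs. The paper also appeals to Heine--Borel and notes boundedness by the simplex, but for closedness it runs an explicit sequential argument: given $p_k \to p$ with each $p_k \in P_\alpha$, it bounds the difference $\mbox{E}_p[d_n(\pi^n(\bU^n),\bX^n)] - \mbox{E}_{p_k}[d_n(\pi^n(\bU^n),\bX^n)]$ term by term using $|p - p_k| < \epsilon$ and the boundedness of $d$, then pushes the resulting inequality through the minimum over $\pi^n$ to conclude $p \in P_\alpha$. Your argument is cleaner and more conceptual: you observe that $\varphi(p) = \min_\pi \mbox{E}_p[d(\pi(U),X)]$ is the pointwise minimum of finitely many linear functionals of $p$ (one per map $\pi : \cU \to \cX$, finitely many since both alphabets are finite), hence continuous, so $P_\alpha = \varphi^{-1}([\alpha,\infty))$ is closed. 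This buys brevity and avoids the somewhat awkward detour through $n$-fold quantities that the paper takes; the paper's approach, while more laborious, makes the role of the bounded-distortion hypothesis visible at the level of the $\epsilon$-bounds, though in the finite-alphabet setting that hypothesis is automatic anyway.
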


\begin{proof}
See Appendix~\ref{app-1}.
\end{proof}

In the following remark, we state that the information secure MACC capacity is well-defined by employing the results of 
Proposition~\ref{prop:p-alpha-compactness}.

\begin{remark}
From proposition~\ref{prop:p-alpha-compactness}, we know that the set that $C^{(I)}$ is defined on $P_\alpha$ is compact.
Then, by using ``Maximum-Minimum Theorem'' \cite{analysis}, $I \left( U ; Y \right)$ is bounded and gets its minimum and maximum values
on the compact set $P_\alpha$, since $I \left( U ; Y \right)$ is a continuous function of $p \left( u , x \right)$.
Hence we can write maximum instead of supremum in the definition of $C^{(I)}$.
\label{rem:well-defined-capacity}
\end{remark}

We next introduce a theorem which states that the secure MACC capacity, i.e., the supremum of rates $R$ such that
$\left( R , \alpha \right)$ is in the secure MACC region, is equal to the information secure MACC capacity given in
\eqref{eq:information-capacity}. The following theorem constitutes the main contribution of this paper by providing
the maximum achievable rate of reliable communication for the communication system defined in Section~\ref{ssec:problem-statement}.
Therefore the theorem introduces the fundamental limits on the content identification problem, i.e., provides the upper bound
on the performance of any content identification algorithm satisfying certain level of privacy.

\begin{theorem}
{\em Secure Memoryless Asymmetric Channel Coding Theorem:}
For a discrete memoryless communications channel $\left( \cX, p\left( y| x \right), \cY \right)$ and for any given $\alpha$,
if $P_\alpha$ (defined via (\ref{eq:p-alpha-def})) is non-empty, then we have
\begin{align}
C \left( \alpha \right) = C^{(I)} \left( \alpha \right),
\label{eq:main-theorem}
\end{align}
where $C^{(I)} \left( \alpha \right)$ is given via (\ref{eq:information-capacity}). However,
if $P_\alpha$ is empty, then $C \left( \alpha \right) = 0$.
\label{thm:main-theorem}
\end{theorem}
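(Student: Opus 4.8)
The plan is to establish the two directions of the equality $C(\alpha) = C^{(I)}(\alpha)$ separately, treating the trivial case ($P_\alpha = \varnothing$) as already handled by Proposition~\ref{prop:p-alpha-compactness}. For the \emph{achievability} direction ($C(\alpha) \geq C^{(I)}(\alpha)$), I would fix a joint p.m.f. $p(u,x) \in P_\alpha$ achieving the maximum in \eqref{eq:information-capacity}, and use it to drive an i.i.d.\ random codebook generation: for each message $w$, generate $\bx^n(w)$ and $\bu^n(w)$ jointly i.i.d.\ according to $p(u,x)$ as in Def.~\ref{def:MACC}. The encoder transmits $\bx^n(W)$ through the DMCC, and the decoder performs joint typicality decoding between the received $\by^n$ and the decoder codewords $\{\bu^n(w)\}$, exploiting the Markov chain $U \leftrightarrow X \leftrightarrow Y$ so that $(\bU^n, \bY^n)$ are jointly typical w.h.p.\ under the true message. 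A standard packing/union-bound argument shows the probability of a competing codeword $\bu^n(w')$ being jointly typical with $\bY^n$ decays provided $R < I(U;Y)$. The one genuinely new ingredient relative to Shannon's theorem is checking the \emph{security constraint}: because each codeword pair is drawn i.i.d.\ from $p(u,x) \in P_\alpha$, a concentration argument (law of large numbers on the per-letter distortion, applied to the optimal single-letter estimator $\pi$) shows that the random code is, with high probability, essentially $\alpha$-secure; one then needs to argue that the $n$-fold minimization in \eqref{eq:alpha-security} cannot do better asymptotically than the single-letter bound defining $P_\alpha$ — this single-letterization of the estimator distortion is, I expect, the main obstacle, and I would handle it by showing that for codewords drawn i.i.d., any $n$-fold estimator's expected distortion is lower bounded by $n$ times the single-letter optimum up to vanishing terms (a convexity/averaging argument over coordinates). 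A routine expurgation step then converts the average-error, average-security guarantees into a deterministic code with small maximal error and security at least $\alpha - \epsilon$, and letting $\epsilon \to 0$ yields $C(\alpha) \geq C^{(I)}(\alpha)$.

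For the \emph{converse} direction ($C(\alpha) \leq C^{(I)}(\alpha)$), I would take any achievable pair $(R,\alpha)$ with a sequence of MACCs satisfying \eqref{eq:achievable-security}, and apply Fano's inequality to the decoder: since $\hat W = g(\bY^n)$ with $P_e^{(n)} \to 0$ and $W$ is uniform on $\cW$, we get $nR = H(W) \leq I(W; \bY^n) + n\epsilon_n$ with $\epsilon_n \to 0$. Now I would bound $I(W;\bY^n)$; since $W$ indexes $\bU^n$ deterministically via $h^n$, we have $I(W;\bY^n) \leq I(\bU^n, W; \bY^n) = I(\bU^n;\bY^n)$ (using that $W \to \bU^n \to \bX^n \to \bY^n$, since $\bU^n$ together with the codebook determines what can be said about $\bY^n$ — more carefully, $W \to (\bX^n,\bU^n) \to \bY^n$ and $\bY^n$ depends on $W$ only through $\bX^n$, while the decoder only sees $\bY^n$). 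Then the single-letterization: because the MACC generates codewords i.i.d.\ according to some $p(u,x)$, the pair $(U_i, Y_i)$ at each coordinate $i$ has the same marginal law, and by the chain rule plus the memorylessness of both the channel and the perturbation, $I(\bU^n;\bY^n) \leq \sum_{i=1}^n I(U_i; Y_i) = n\, I(U;Y)$. The final step is to observe that the security condition \eqref{eq:achievable-security} forces the underlying single-letter p.m.f.\ $p(u,x)$ to lie in $P_\alpha$ (in the limit): the asymptotic $n$-fold security $\geq \alpha$ together with the single-letterization of estimator distortion (the same lemma used in achievability, now as an upper bound direction) implies $\min_{\pi:\cU\to\cX} \mathrm{E}[d(\pi(U),X)] \geq \alpha$, i.e.\ $p(u,x) \in P_\alpha$. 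Hence $R \leq I(U;Y) \leq \max_{p(u,x)\in P_\alpha} I(U;Y) = C^{(I)}(\alpha)$, and taking the supremum over achievable $R$ gives the claim.

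The subtle points I would be most careful about: (i) in the random-coding argument, the decoder codewords $\{\bu^n(w')\}$ for $w' \neq w$ are independent of $\bY^n$, but $\bu^n(w)$ is correlated with $\bY^n$ only through the Markov chain — I must invoke $p(y,u|x) = p(y|x)p(u|x)$ (stated in the excerpt) to get $(\bU^n,\bY^n)$ jointly typical; (ii) the equivalence between the per-message security definition \eqref{eq:alpha-security} (which in a MACC, since codewords are i.i.d.\ across messages, reduces to a statement about a single codeword pair) and the feasible-set definition \eqref{eq:p-alpha-def} requires the single-letterization lemma in both directions, so I would state and prove that lemma first as the technical core; (iii) the expurgation must simultaneously preserve low maximal error and the security lower bound, which is fine since we can discard a constant fraction of codewords and security of a subcode is no worse — in fact discarding codewords can only increase the minimum-distortion since the estimator has fewer pairs to exploit, though the cleanest route is to note security is a property of the marginal $p(u,x)$ which expurgation leaves essentially unchanged. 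I expect the single-letterization of the $n$-fold estimator distortion to be the crux; everything else follows the template of the point-to-point channel coding theorem with side information.
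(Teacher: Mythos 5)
Your proposal follows essentially the same route as the paper: random i.i.d.\ codebook generation from an optimal $p(u,x)\in P_\alpha$ with joint-typicality decoding for achievability, Fano plus data processing and coordinate-wise single-letterization for the converse, and — exactly as you anticipate — the separability of the $n$-fold estimator distortion into single-letter optima (the paper's Lemma~\ref{lem:seperable-estimation}, proved via a Bayesian posterior-risk reduction in Lemma~\ref{lem:mapping-lemma}) as the technical core; in fact the separability holds with equality for every finite $n$, so your proposed concentration/LLN step is unnecessary. The one place your justification is thinner than the paper's is the converse inequality $I(W;\bY^n)\le I(\bU^n;\bY^n)$: the Markov chain $W\leftrightarrow\bU^n\leftrightarrow\bX^n\leftrightarrow\bY^n$ you invoke does not hold for an arbitrary code (given $\bU^n$, the message $W$ may still carry information about $\bX^n$ if $h^n$ is many-to-one), and the paper closes this by first proving that any vanishing-error MACC must have a one-to-one $h^n$ (Lemma~\ref{lem:lemma-2}), which lets it reverse $\bU^n\leftrightarrow W\leftrightarrow\bY^n$ into $W\leftrightarrow\bU^n\leftrightarrow\bY^n$; your argument would need that lemma (or an explicit ensemble-average computation) to be complete.
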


Before presenting the proof of the theorem, we first introduce three lemmas that are used in the proof. The
first lemma provides the memoryless property of the joint p.m.f. $p \left(y , u\right)$.
The other two provide {\em essential} results about the ``separability'' of the $n$-fold codebook estimator function $\pi^n$.

\begin{lemma}
Given $p \left( \bx^n \right) = \prod_{i=1}^n p \left( x_i \right)$,
a discrete memoryless communications channel $\left( \cX , p \left( y | x \right) , \cY \right)$,
and  a $\left( 2^{nR} , n \right)$ MACC
$\left( \cX , \cC_X , p \left( u | x \right) , \cU, \cC_U \right)$,
we have
\begin{equation}
p \left( \by^n, \bu^n \right) = \prod_{i=1}^n p \left( y_i, u_i\right),
\label{eq:lemma-1}
\end{equation}
where $p \left( y, u \right) = \sum_{i=1}^n p(x) p(y|x) p(u|x)$.
\label{lem:lemma-1}
\end{lemma}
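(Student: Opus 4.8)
The plan is to treat this as the standard fact ``i.i.d.\ input $+$ memoryless channel $\Rightarrow$ i.i.d.\ output,'' extended to carry the side branch $p(u\,|\,x)$ alongside the channel branch $p(y\,|\,x)$. First I would marginalize the joint law of $\left(\bX^n,\bU^n,\bY^n\right)$ over the encoder codeword:
\[
p\left(\by^n, \bu^n\right) = \sum_{\bx^n \in \cX^n} p\left(\bx^n\right)\, p\left(\by^n, \bu^n \,\big|\, \bx^n\right),
\]
and then observe that, conditioned on $\bX^n = \bx^n$, the generation of the decoder codeword $\bU^n$ and the transmission through the DMCC producing $\bY^n$ are independent events (exactly the point already made in the problem setup, where the single-letter chain $U \leftrightarrow X \leftrightarrow Y$ is established). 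This yields the $n$-fold conditional independence $p\left(\by^n, \bu^n \,|\, \bx^n\right) = p\left(\by^n \,|\, \bx^n\right)\, p\left(\bu^n \,|\, \bx^n\right)$, i.e.\ the Markov chain $\bU^n \leftrightarrow \bX^n \leftrightarrow \bY^n$.

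Next I would substitute the three coordinatewise factorizations available by hypothesis: $p\left(\bx^n\right) = \prod_{i=1}^n p(x_i)$ (i.i.d.\ encoder codebook, Def.~\ref{def:MACC}), $p\left(\by^n \,|\, \bx^n\right) = \prod_{i=1}^n p(y_i \,|\, x_i)$ (memoryless DMCC), and $p\left(\bu^n \,|\, \bx^n\right) = \prod_{i=1}^n p(u_i \,|\, x_i)$ (memoryless perturbation, Def.~\ref{def:MACC}). Since the sum over $\bx^n \in \cX^n$ is a finite iterated sum over the individual coordinates and the summand is a product over $i$, I would interchange the finite sum and finite product:
\[
p\left(\by^n, \bu^n\right) = \sum_{x_1 \in \cX} \cdots \sum_{x_n \in \cX} \prod_{i=1}^n p(x_i)\, p(y_i \,|\, x_i)\, p(u_i \,|\, x_i) = \prod_{i=1}^n \left( \sum_{x \in \cX} p(x)\, p(y_i \,|\, x)\, p(u_i \,|\, x) \right).
\]
Identifying the inner sum with the single-letter marginal $p(y,u) = \sum_{x \in \cX} p(x)\, p(y\,|\,x)\, p(u\,|\,x)$ then gives $p\left(\by^n,\bu^n\right) = \prod_{i=1}^n p(y_i,u_i)$, which is \eqref{eq:lemma-1}.

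I do not expect a real obstacle here: the computation is a routine interchange of a finite sum with a finite product. The only step deserving more than algebra is the conditional independence $p\left(\by^n, \bu^n \,|\, \bx^n\right) = p\left(\by^n\,|\,\bx^n\right)\,p\left(\bu^n\,|\,\bx^n\right)$; I would emphasize that this is \emph{not} a ``broadcast-channel'' simultaneity assumption but a structural feature of the content-identification model, namely that the robust hashing (decoder codebook generation) and the channel transmission act on the encoder codeword separately, hence independently given $\bX^n$, and that both are memoryless so the product-form conditionals apply.
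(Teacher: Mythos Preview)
Your proposal is correct and follows essentially the same approach as the paper: marginalize over $\bx^n$, invoke the conditional independence $p(\by^n,\bu^n\,|\,\bx^n)=p(\by^n\,|\,\bx^n)\,p(\bu^n\,|\,\bx^n)$ from the Markov chain $U\leftrightarrow X\leftrightarrow Y$, substitute the three memoryless factorizations, and interchange the finite sum with the finite product. The paper's proof is line-for-line the same computation.
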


\begin{proof}
Following the definition and properties of joint p.m.f. \cite{prob}, we have
\begin{eqnarray}
p \left( \by^n, \bu^n \right) & = & \sum_{\bx^n \in \cX} p \left( \bx^n, \by^n, \bu^n \right) \nonumber \\
& = & \sum_{\bx^n \in \cX^n} p \left(\bx^n \right) p \left( \by^n | \bx^n \right) p \left( \bu^n | \bx^n \right) \label{eq:app2-1} \\
& = & \sum_{\bx^n \in \cX^n} \prod_{i=1}^n p \left(\bx_i \right) p \left( \by_i | \bx_i \right) p \left( \bu_i | \bx_i \right) \label{eq:app2-2} \\
& = & \prod_{i=1}^n \sum_{x_i \in \cX} p \left(x_i \right) p \left( y_i | x_i \right) p \left( u_i | x_i \right) \nonumber \\
& = & \prod_{i=1}^n \sum_{x_i \in \cX} p \left(x_i, y_i, u_i \right), \nonumber \\
& = & \prod_{i=1}^n p \left(y_i, u_i \right).
\end{eqnarray}
where \eqref{eq:app2-1} follows from that $U$, $X$ and $Y$ forms a Markov chain in the specified order, i.e., $U \leftrightarrow X \leftrightarrow Y$ and \eqref{eq:app2-2} follows since the communication channel and the perturbation are memoryless.
\end{proof}

\begin{lemma}
Given $n$ and the triplet $\left(X, U, V_1^n\right)$ such that $X \leftrightarrow U \leftrightarrow V_1^n$ where $\forall i$ $v_i \in \cU$, $u \in \cU$, $x \in \cX$, we have
\begin{align}
	\underset{\bpi \, : \, \cU^{n+1} \rightarrow \cX }{\min} \mbox{E} \left[d\left(\bpi\left(U,V_1^n\right),X\right)\right]
	=\underset{\pi \, : \, \cU \rightarrow \cX }{\min} \mbox{E} \left[d\left(\pi\left(U\right),X\right)\right]
	\label{eq:mapping-lemma}
\end{align}
\label{lem:mapping-lemma}
\end{lemma}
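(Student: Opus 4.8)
The plan is to prove the identity by establishing the two inequalities separately; all of the substance lies in the Markov property $X \leftrightarrow U \leftrightarrow V_1^n$, which formalizes the intuition that, given $U$, the additional observations $V_1^n$ carry no further information about $X$. The inequality ``$\le$'' is immediate: taking any single-letter estimator $\pi : \cU \rightarrow \cX$ attaining the right-hand minimum, define $\bpi : \cU^{n+1} \rightarrow \cX$ by $\bpi \left( u , v_1^n \right) := \pi \left( u \right)$, i.e., simply discard the last $n$ coordinates. Then $\mbox{E} \left[ d \left( \bpi \left( U , V_1^n \right) , X \right) \right] = \mbox{E} \left[ d \left( \pi \left( U \right) , X \right) \right]$, so the minimum over the larger class $\left\{ \bpi : \cU^{n+1} \rightarrow \cX \right\}$ cannot exceed the minimum over $\left\{ \pi : \cU \rightarrow \cX \right\}$.

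For the reverse inequality ``$\ge$'', I would expand the left-hand expectation by conditioning on the pair $\left( U , V_1^n \right)$,
\[
\mbox{E} \left[ d \left( \bpi \left( U , V_1^n \right) , X \right) \right] = \sum_{u \in \cU} \sum_{v_1^n \in \cU^n} p \left( u , v_1^n \right) \sum_{x \in \cX} p \left( x | u , v_1^n \right) d \left( \bpi \left( u , v_1^n \right) , x \right) ,
\]
and then invoke the Markov chain to replace $p \left( x | u , v_1^n \right)$ by $p \left( x | u \right)$ for every $\left( u , v_1^n \right)$ with $p \left( u , v_1^n \right) > 0$. The innermost sum then equals $\sum_{x} p \left( x | u \right) d \left( \bpi \left( u , v_1^n \right) , x \right)$, which depends on $\bpi$ only through the single value $\bpi \left( u , v_1^n \right) \in \cX$. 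Since $\bpi$ may be specified independently at each argument and $\cX$, $\cU$ are finite, the minimization over functions decouples into a pointwise minimization,
\[
\min_{\bpi : \cU^{n+1} \rightarrow \cX} \mbox{E} \left[ d \left( \bpi \left( U , V_1^n \right) , X \right) \right] = \sum_{u} \sum_{v_1^n} p \left( u , v_1^n \right) \min_{x' \in \cX} \sum_{x} p \left( x | u \right) d \left( x' , x \right) = \sum_{u} p \left( u \right) \min_{x' \in \cX} \sum_{x} p \left( x | u \right) d \left( x' , x \right) ,
\]
where the last equality sums $v_1^n$ out of $p \left( u , v_1^n \right)$. The identical computation applied to the right-hand side of \eqref{eq:mapping-lemma} gives $\min_{\pi : \cU \rightarrow \cX} \mbox{E} \left[ d \left( \pi \left( U \right) , X \right) \right] = \sum_{u} p \left( u \right) \min_{x' \in \cX} \sum_{x} p \left( x | u \right) d \left( x' , x \right)$, the same expression, so the two minima coincide.

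I do not anticipate a genuine obstacle; the care needed is purely in the bookkeeping. Terms with $p \left( u , v_1^n \right) = 0$ (resp. $p \left( u \right) = 0$) contribute nothing, so the pointwise minimizer at those arguments is irrelevant and may be taken to be an arbitrary element of $\cX$. The ``minimum of a sum equals sum of pointwise minima'' step is legitimate precisely because $\bpi$ is an unrestricted deterministic map --- in particular not required to be sequential, as emphasized after Def.~\ref{def:alpha-secure} --- and because finiteness of the alphabets guarantees all the minima are attained. If one wishes to avoid the decoupling argument, the same bound follows by fixing $\bpi$, letting $v_1^n \left( u \right)$ minimize $\sum_{x} p \left( x | u \right) d \left( \bpi \left( u , v_1^n \right) , x \right)$ over $v_1^n$ for each $u$, and checking that $u \mapsto \bpi \left( u , v_1^n \left( u \right) \right)$ is a single-letter estimator whose expected distortion is at most $\mbox{E} \left[ d \left( \bpi \left( U , V_1^n \right) , X \right) \right]$, again by the Markov property.
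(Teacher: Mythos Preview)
Your proposal is correct and follows essentially the same route as the paper: both arguments reduce the minimization to a pointwise (posterior-risk) minimization over $\cX$ for each observed value, then use the Markov chain $X \leftrightarrow U \leftrightarrow V_1^n$ to replace $p\left(x\,|\,u,v_1^n\right)$ by $p\left(x\,|\,u\right)$, at which point both sides of \eqref{eq:mapping-lemma} become $\sum_u p(u)\min_{x'\in\cX}\sum_x p(x\,|\,u)\,d(x',x)$. The only cosmetic difference is that the paper invokes ``Bayesian estimation theory'' to justify the decoupling in one line, whereas you spell out the two inequalities and the pointwise-minimization step explicitly; the content is the same.
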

\begin{proof}
Let $\hat{X} \eqdef \bpi \left(U,V_1^n\right)$. Then,
\begin{align}
	\underset{\bpi \, : \, \cU^{n+1} \rightarrow \cX }{\min} \mbox{E}\left[d\left(\bpi\left(U,V_1^n\right),X\right)\right]
	= \underset{\bpi \, : \, \cU^{n+1} \rightarrow \cX }{\min} \mbox{E} \left[d\left(\hat{X},X\right)\right]
\end{align}
where the expectation is with respect to the joint p.m.f. of $\left(X, U, V_1^n\right)$.

From Bayesian estimation theory, this is equivalent to solving
\begin{align}
	\underset{\hat{x} \in \cX}{\min} \sum_{x \in \cX} d\left(\hat{x},x\right) p\left(x|u,v_1^n\right)
	\label{eq:bayesian-estimation}
\end{align}
for any given $U = u$, $V_1^n = v_1^n$. Note that the argument $f\left(x\right)$ is known as posterior risk conditioned on $\left(U,V_1^n\right) = \left(u,v_1^n\right)$. Since $X \leftrightarrow U \leftrightarrow V_1^n$, we have $p\left(x|u,v_1^n\right) = p\left(x|u\right)$, which implies \eqref{eq:bayesian-estimation} is equivalent to
\begin{align}
\underset{\hat{x} \in \cX}{\min} \sum_{x \in \cX} d\left(\hat{x},x\right) p\left(x|u\right).
\label{eq:bayesian-equivalent}
\end{align}
On the other hand, \eqref{eq:bayesian-equivalent} would be the problem resulting from solving the right hand side of \eqref{eq:mapping-lemma} via following analogous steps to the aforementioned procedure. Hence, this completes the proof.
\end{proof}

\begin{lemma}
Given $\left(U^n,X^n\right) \sim p\left(u^n,x^n\right) = \prod_{i=1}^{n} p\left(u_i|x_i\right)p\left(x_i\right)$, we have
\begin{align}
	\underset{\pi^n \, : \, \cU^n \rightarrow \cX^n }{\min} \mbox{E}\left[d^n\left(\pi^n\left(U^n\right),X^n\right)\right]
	= \frac{1}{n} \sum_{i=1}^{n} \underset{\pi \, : \, \cU \rightarrow \cX }{\min} \mbox{E}\left[d\left(\pi\left(U_i\right),X_i\right)\right].
\label{eq:seperable-estimation}
\end{align}
\label{lem:seperable-estimation}
\end{lemma}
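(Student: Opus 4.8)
The plan is to show that, under the i.i.d.\ product structure, an optimal unconstrained $n$-fold estimator can be taken to act coordinate-wise, and then to reduce each coordinate to a single-letter problem via Lemma~\ref{lem:mapping-lemma}. First I would expand the $n$-fold distortion: writing $\pi^n = \left( \pi^n_1 , \ldots , \pi^n_n \right)$ for the $n$ coordinate maps $\pi^n_i \, : \, \cU^n \rightarrow \cX$, linearity of expectation gives $\mbox{E}\left[ d_n \left( \pi^n \left( U^n \right) , X^n \right) \right] = \frac{1}{n} \sum_{i=1}^n \mbox{E}\left[ d \left( \pi^n_i \left( U^n \right) , X_i \right) \right]$. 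Since an arbitrary map $\cU^n \rightarrow \cX^n$ is precisely a tuple of $n$ arbitrary maps $\cU^n \rightarrow \cX$ with nothing linking the coordinates, and the $i$-th summand depends only on $\pi^n_i$, the minimization decouples:
\begin{align}
\min_{\pi^n \, : \, \cU^n \rightarrow \cX^n} \mbox{E}\left[ d_n \left( \pi^n \left( U^n \right) , X^n \right) \right]
= \frac{1}{n} \sum_{i=1}^n \min_{\pi^n_i \, : \, \cU^n \rightarrow \cX} \mbox{E}\left[ d \left( \pi^n_i \left( U^n \right) , X_i \right) \right] .
\end{align}

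Next I would reduce each inner minimization to a single-letter one. Marginalizing $p \left( u^n , x^n \right) = \prod_{j=1}^n p \left( u_j | x_j \right) p \left( x_j \right)$ over all $x_j$ with $j \neq i$ yields $p \left( u^n , x_i \right) = p \left( u_i | x_i \right) p \left( x_i \right) \prod_{j \neq i} p \left( u_j \right)$, hence $p \left( x_i | u^n \right) = p \left( x_i | u_i \right)$; equivalently, $X_i \leftrightarrow U_i \leftrightarrow \left( U_1 , \ldots , U_{i-1} , U_{i+1} , \ldots , U_n \right)$ is a Markov chain. Applying Lemma~\ref{lem:mapping-lemma} (with its $n$ taken to be $n-1$, $U_i$ in the role of $U$, the remaining $n-1$ coordinates of $U^n$ in the role of $V_1^{n-1}$, and $X_i$ in the role of $X$) gives $\min_{\pi^n_i \, : \, \cU^n \rightarrow \cX} \mbox{E}\left[ d \left( \pi^n_i \left( U^n \right) , X_i \right) \right] = \min_{\pi \, : \, \cU \rightarrow \cX} \mbox{E}\left[ d \left( \pi \left( U_i \right) , X_i \right) \right]$. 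Substituting this into the decoupled expression above yields \eqref{eq:seperable-estimation}.

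The expansion and the marginalization are routine; the step that carries the real content of the lemma is the assertion that the unconstrained $n$-fold minimization splits into $n$ coordinate-wise minimizations, each of which may ignore all input coordinates but one. The splitting itself is immediate from separability of the objective over a product domain, but the ``ignore the other coordinates'' part is exactly where the i.i.d.\ product form is essential: it is what makes $X_i \leftrightarrow U_i \leftrightarrow \left( U^n \setminus U_i \right)$ hold and thus lets Lemma~\ref{lem:mapping-lemma} apply. Without that structure the Bayes-optimal estimate of $X_i$ could genuinely depend on the whole block $U^n$, so I would present the Markov-chain verification explicitly rather than asserting it.
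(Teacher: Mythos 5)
Your proof is correct and follows essentially the same route as the paper: decompose the $n$-fold distortion coordinate-wise, observe that the unconstrained minimization over $\cU^n \rightarrow \cX^n$ decouples into $n$ independent minimizations over $\cU^n \rightarrow \cX$, and then invoke Lemma~\ref{lem:mapping-lemma} to collapse each to a single-letter estimator. The only differences are presentational and in your favor: you replace the paper's two-sided inequality argument with a single decoupling observation, and you explicitly verify the Markov chain $X_i \leftrightarrow U_i \leftrightarrow \left( U^n \setminus U_i \right)$ that Lemma~\ref{lem:mapping-lemma} requires, which the paper uses without comment.
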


Lemma~\ref{lem:seperable-estimation} states that the best $n$-fold codebook estimator function is separable, i.e., its performance
can be quantified via the performance of the best single-letter codebook estimator functions. Hence, it implies that in the content
identification problem, the performance of the best attack towards the private content can be determined via the performance of the
best attack towards a single entry of the content.
\begin{proof}
Let $\bpi^n = \underset{\pi^n \, : \,\cU^n \rightarrow \cX^n}{\arg\min} \mbox{E} \left[d^n\left(\pi^n\left(U^n\right),X^n\right)\right]$. Furthermore given $\bpi^n(.)$, define $\bpi_i \,:\, \cU^n \rightarrow \cX$ for all $i \in \{ 1,2, \ldots ,n\}$ such that $\bpi^n \left(u^n\right) = \left(\bpi_1\left(u^n\right),\bpi_2\left(u^n\right), \ldots ,\bpi_n\left(u^n\right)\right)$. Then, we have
\begin{eqnarray}
\underset{\pi^n \, :\, \cU^n \rightarrow \cX^n}{\min} \mbox{E} \left[d^n\left(\pi^n\left(U^n\right),X^n\right)\right]
& = &\mbox{E} \left[d^n\left(\bpi^n\left(U^n\right),X^n\right)\right] \nonumber \\
& = & \frac{1}{n} \sum_{i=1}^{n} \mbox{E} \left[d\left(\bpi_i\left(U^n\right),X_i\right)\right] \nonumber \\
& \geq & \frac{1}{n} \sum_{i=1}^{n} \underset{\tpi^n \, :\, \cU^n \rightarrow \cX}{\min}\mbox{E} \left[d\left(\tpi^n\left(U^n\right),X_i\right)\right] \label{eq:min-estimator}  \\
& = & \frac{1}{n} \sum_{i=1}^{n} \underset{\pi \, :\, \cU \rightarrow \cX}{\min}\mbox{E} \left[d\left(\pi\left(U_i\right),X_i\right)\right], \nonumber
\end{eqnarray}
where \eqref{eq:min-estimator} follows since per definition
\begin{align}
\underset{\tpi^n \, :\, \cU^n \rightarrow \cX}{\min}\mbox{E} \left[d\left(\tpi^n\left(U^n\right),X_i\right)\right]
\leq \mbox{E} \left[d\left(\bpi_i\left(U^n\right),X_i\right)\right]
\label{eq:min-perletter-estimation}
\end{align}
for all $\tpi_i$ and \eqref{eq:min-perletter-estimation} follows from Lemma~\ref{lem:mapping-lemma}.\\
Now let $\pi_i^{*} = \underset{\pi \, :\, \cU \rightarrow \cX}{arg\min} \, \mbox{E}\left[d\left(\pi\left(U_i\right),X_i\right)\right]$ for all $i \in \{ 1,2, \ldots ,n \}$ and subsequently define $\pi^{n,*} \, : \, \cU^n \rightarrow \cX^n$ such that $\pi^{n,*}\left(U^n\right) = \left[ \pi_1^{*},\pi_2^{*}, \ldots ,\pi_n^{*} \right]$. Then, we have
\begin{eqnarray}
\frac{1}{n} \sum_{i=1}^{n} \underset{\pi \, :\, \cU \rightarrow \cX}{\min}\mbox{E} \left[d\left(\pi\left(U_i\right),X_i\right)\right]
& = & \frac{1}{n} \sum_{i=1}^{n} \mbox{E} \left[d\left(\pi_i^{*}\left(U_i\right),X_i\right)\right] \nonumber  \\
& = & \mbox{E} \left[d^n\left(\pi^{n,*}\left(U^n\right),X^n\right)\right] \nonumber \\
& \geq & \underset{\pi^n \, :\, \cU^n \rightarrow \cX^n}{\min} \mbox{E} \left[d^n\left(\pi^n\left(U^n\right),X^n\right)\right]. \label{eq:seperability-ineq}
\end{eqnarray}
Combining \eqref{eq:min-perletter-estimation} and \eqref{eq:seperability-ineq} we get \eqref{eq:seperable-estimation}.
This completes  the proof.	
\end{proof}

Since we have the necessary lemmas, we next provide the proof of Theorem~\ref{thm:main-theorem}. The proof of Theorem~\ref{thm:main-theorem} is separated into two parts, i.e, achievability and converse proofs. In the achievability proof, we validate the achievability in the theorem, i.e., we show that every rate $R < C^{(I)} \left( \alpha \right)$ is achievable. In the converse proof, which concludes the proof of Theorem~\ref{thm:main-theorem}, we show that every achievable rate $R$ satisfies $R < C^{(I)} \left( \alpha \right)$. We next continue with the proof of the forward statement of Theorem~\ref{thm:main-theorem}.

\subsection{Achievability}
\label{ssec:achievability}
In this section, we provide a theorem which constitutes the achievability proof of Theorem~\ref{thm:main-theorem} by showing
that any rate below the information secure MACC capacity, i.e., $C^{(I)}$, is achievable. The following theorem emphasizes the
feasibility of the secure and reliable communication with a rate below the information secure MACC capacity. In the content
identification problem, it corresponds to the achievability of any performance below the fundamental limit for a certain level
of privacy.
\begin{theorem}
({\em Achievability})
For any given $\alpha$ such that $P_\alpha$ is non-empty, and for every rate $R < C^{(I)} \left( \alpha \right)$,
there exists a sequence of $\left( 2^{nR} , n \right)$ $\alpha$-secure MACCs with arbitrarily small maximal probability  of error
for sufficiently large $n$.
\label{thm:achievability}
\end{theorem}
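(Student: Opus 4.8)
The plan is to prove Theorem~\ref{thm:achievability} by a random coding argument over memoryless asymmetric channel codes, with Lemma~\ref{lem:lemma-1} and Lemma~\ref{lem:seperable-estimation} absorbing the two features that distinguish this setup from Shannon's theorem: the codebook asymmetry and the security requirement. Concretely, since $P_\alpha$ is non-empty, Proposition~\ref{prop:p-alpha-compactness} and Remark~\ref{rem:well-defined-capacity} guarantee that the maximum in the definition of $C^{(I)}(\alpha)$ is attained, so I would fix a maximizer $p^*(u,x) \in P_\alpha$ with $I(U;Y) = C^{(I)}(\alpha) > R$ under $p^*$, and a small $\epsilon > 0$ with $R + 3\epsilon < I(U;Y)$. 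I would then build the $(2^{nR},n)$ MACC by i.i.d. random codebook generation from $p^*$ as in Definition~\ref{def:MACC} (each pair $(\bx^n(w),\bu^n(w))$ with coordinates $\sim p^*(u,x)$, independent across $w \in \cW$), and equip the decoder with joint-typicality decoding relative to the single-letter law $p^*(u,y) := \sum_x p^*(u,x) p(y|x)$: on input $\by^n$, output the unique $\hW$ with $(\bu^n(\hW),\by^n)$ jointly $\epsilon$-typical, and output $0$ otherwise.

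The security constraint is then essentially automatic. By Definition~\ref{def:MACC}, for a uniformly chosen message the encoder--decoder codeword pair $(\bX^n,\bU^n)$ has law $\prod_{i=1}^n p^*(u_i,x_i)$, so Lemma~\ref{lem:seperable-estimation} applies verbatim and yields $\min_{\pi^n}\mbox{E}[d_n(\pi^n(\bU^n),\bX^n)] = \min_{\pi}\mbox{E}[d(\pi(U),X)]$, and this last quantity is $\geq \alpha$ precisely because $p^* \in P_\alpha$ (Definition~\ref{def:p-alpha}). Hence the constructed code is $\alpha$-secure by construction, so the security condition imposes nothing further and need not be revisited in the error analysis.

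For reliability I would follow the classical route. Taking $W=1$ without loss of generality, Lemma~\ref{lem:lemma-1} gives $(\bU^n(1),\bY^n) \sim \prod_{i=1}^n p^*(u_i,y_i)$, so the law of large numbers makes the probability that $(\bU^n(1),\bY^n)$ fails to be jointly $\epsilon$-typical tend to $0$; and for $w \neq 1$ the codeword $\bU^n(w)$ is independent of $\bX^n(1)$ and hence of $\bY^n$, with coordinates i.i.d.\ $\sim p^*(u)$ while those of $\bY^n$ are i.i.d.\ $\sim p^*(y)$, so the joint-typicality estimate gives that this pair is jointly $\epsilon$-typical with probability at most $2^{-n(I(U;Y)-3\epsilon)}$ for large $n$. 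A union bound over the $2^{nR}-1$ competing messages then bounds $P_e^{(n)}$ by $o(1)+2^{-n(I(U;Y)-R-3\epsilon)}$, which tends to $0$. Since the random ensemble is exchangeable in $w$ and the decoding rule is symmetric, $\lambda_w$ does not depend on $w$, so $\lambda^{(n)}=\max_w \lambda_w = P_e^{(n)} \to 0$; in particular the average-error bound upgrades directly to a maximal-error bound and no expurgation step is needed. Letting $n \to \infty$ completes the proof.

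The only delicate point is the bookkeeping in the second step: recognizing that for a \emph{randomly generated} MACC the distribution entering the $\alpha$-security definition is the codebook-averaged law $\prod_i p^*(u_i,x_i)$, which is exactly the hypothesis of Lemma~\ref{lem:seperable-estimation}. Once this identification is made, nothing genuinely new is required --- the remainder is the standard joint-typicality achievability argument with the decoder codeword $U$ (rather than the channel input $X$) playing the role of the decoding variable, and with Lemma~\ref{lem:lemma-1} supplying the product form of $(\bU^n,\bY^n)$ that the joint AEP and the joint-typicality estimate rely on.
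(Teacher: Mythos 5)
Your proposal follows essentially the same route as the paper: fix a maximizer $p^*(u,x)\in P_\alpha$, generate the codebooks i.i.d.\ as in Definition~\ref{def:MACC}, decode by joint typicality, invoke Lemma~\ref{lem:seperable-estimation} to reduce the $n$-fold security constraint to the single-letter one (which holds by membership in $P_\alpha$), and use Lemma~\ref{lem:lemma-1} with the joint AEP and a union bound to drive the average error probability to zero. The one imprecision is your claim that exchangeability makes $\lambda_w$ independent of $w$ for a \emph{realized} codebook --- it only equalizes the ensemble averages of $\lambda_w$ --- so strictly speaking the standard expurgation step is still needed to pass from average to maximal error, a point the paper's own proof also glosses over.
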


\begin{proof}
We prove that for any $\alpha$ with $P_\alpha \neq \emptyset$, and any $R < C^{(I)} \left( \alpha \right)$, $\left(R, \alpha \right)$ pair is achievable by proving the existence of a sequence of $\alpha$-secure MACCs with rate $R$ satisfying the achievability conditions given in \eqref{eq:achievable-security}. In the {\em encoding} part, by choosing $p\left(u,x\right)$ from the set $P_\alpha$, encoder codebook $\cC_X$ and the decoder
codebook $\cC_U$ are generated as stated in Def.~\ref{def:MACC}. After choosing a message $w$ uniformly from the
message set $\cW$ (i.e., $\Pr\left(W = w \right) = 2^{-nR}$ for all $w \in \cW$), $f^n \left( w \right) = \bx^n(w)$  is
generated and transmitted over the DMCC $(\cX, p(y|x), \cY)$, resulting in $\bY^n$ such that
$\Pr \left( \bY^n = \by^n | \bx^n \left( w \right) \right) = \prod_{i=1}^n p \left( y_i | x_i \left( w \right) \right)$.
In the {\em decoding} part, we use jointly typical decoding. Note that $\left\{ u_i \left( W \right), Y_i \right\}_{i=1}^n$
pairs are independent of each other (cf. Lemma~\ref{lem:lemma-1}), where $\bY^n$ is the resulting communication channel
output corresponding to the message $W \in \cW$. If a unique $\hW \in \cW$ exists such that $\left(\bu^n \left( \hW \right),
\bY^n\right) \in A_\ep^{(n)} \left( U , Y \right)$, where $A_\ep^{(n)} \left( U , Y \right)$  is the $\ep$-jointly-typical
set \cite{cover}, defined as
\begin{align}
A_\ep^{(n)} \left( U , Y \right) \eqdef {\Bigg \{}  \left( \bu^n, \by^n \right) \; : \;
& |-\frac{1}{n}\log p\left(\bu^n\right) - H(U)| < \ep, \,
|-\frac{1}{n}\log p\left(\by^n\right) - H(Y)| < \ep,  \, \\
& |-\frac{1}{n}\log p\left(\bu^n,\by^n\right) - H(U,Y)| < \ep
{\Bigg \} }, \label{eq:typical-set}
\end{align}
where $p \left( u , y \right) = \sum_{x \in \cX} p \left( x \right) p \left( y | x \right)
p \left( u | x \right)$, then we declare $g \left( \bY^n \right) = \hW$. Otherwise, i.e., if such a $\hW \in \cW$ is not unique or does not exist, then we declare $g \left( \bY^n \right) = 0$. The error event is defined as
\begin{equation}
\cE \eqdef \left\{ \hW \neq W \right\}.
\label{eq:err-event}
\end{equation}

In order to prove that an $\left(R,\alpha\right)$ pair is achievable, we need to prove that the two achievability conditions
given in \eqref{eq:achievable-security} holds. Since $p\left(u,x\right)$ is chosen from the nonempty set $P_\alpha$, we have
\begin{align}
\min_{\pi \, : \, \cU \rightarrow \cX} \mbox{E} \left[ d \left( \pi \left( U_i \right) , X_i \right) \right] \geq \alpha
\label{eq:eq-0}
\end{align}
for all $i \in \{ 1,2, \ldots ,n\}$. Hence \eqref{eq:eq-0} yields
\begin{eqnarray}
\underset{\pi^n \, :\, \cU^n \rightarrow \cX^n}{\min} \mbox{E} \left[d^n\left(\pi^n\left(U^n\right),X^n\right)\right]
& = &  \frac{1}{n} \sum_{i=1}^{n} \underset{\pi \, : \, \cU \rightarrow \cX }{\min} \mbox{E}\left[d\left(\pi\left(U_i\right),X_i\right)\right],\label{eq:eq-1} \\
& \geq &  \frac{1}{n} \sum_{i=1}^{n} \alpha,\nonumber \\
& = & \alpha, \label{eq:eq-2}
\end{eqnarray}
where \eqref{eq:eq-1} follows from Lemma~\ref{lem:seperable-estimation}. Since \eqref{eq:eq-2} is true for all $n>0$, we have
\begin{align}
\lim_{n \rightarrow \infty} \min_{\pi^n \, : \, \cU^n \rightarrow \cX^N}
\mbox{E} \left[ d \left( \pi^n \left( \bU^n \right) , \bX^n \right) \right] \geq \alpha.
\label{eq:eq-3}
\end{align}
We now prove that the other condition of the achievability is satisfied, too, i.e., we show that
\begin{align}
\lim_{n \rightarrow \infty} P_e^{(n)} = \lim_{n \rightarrow \infty} \Pr \left[ W \neq g \left( \bY^n \right) \right] = 0. \nonumber
\end{align}
From the definition of average probability of error of an MACC we have
\begin{align}
P_e^{(n)} \eqdef \Pr \left( \hat{W} \neq W \right)
= \sum_{w \in \cW} \Pr \left( g \left( \bY^n \right) \neq w \, | \, W = w \right) \Pr \left( W = w \right)
= 2^{-nR} \sum_{w \in \cW} \lambda_w \nonumber.
\end{align}
We calculate the average probability of error by taking the average over all decoder codebooks,
\begin{eqnarray}
\Pr\left( \cE \right) & = & \sum_{\cC_U} \Pr\left( \cC_U \right) P_e^{(n)}\left( \cC_U \right) \nonumber \\
 & = & \frac{1}{2^{nR}}\sum_{w=1}^{2^{nR}}\sum_{\cC_U}\Pr\left(\cC_U \right)\lambda_w\left( \cC_U\right) \nonumber \\
 & = & \sum_{\cC_U} \Pr\left( \cC_U \right)\lambda_1\left( \cC_U\right), \label{eq:average-prob-err-2} \\
 & = & \Pr \left( \cE | W=1 \right), \nonumber
\end{eqnarray}
where \eqref{eq:average-prob-err-2} follows since the codebook construction is symmetric and does not depend on the particular message that was sent. Hence, after this point, w.l.o.g., we assume the message $W=1$ was sent.

Let $E_i$ denotes the event that the codeword $\bX^n(i)$ and $\bY^n$ are jointly typical, where $\bY^n$ is the resulting output of the communication system corresponding to the codeword $\bX^n(1)$. Hence we can define $E_i$ as,
\begin{align}
E_i \eqdef \left\{ \left( \bu^n \left( i \right), \bY^n \right) \in A_\epsilon^{(n)}\right\}, \,\,\,\,\, i \in \left\{1, 2, \dots, 2^{nR} \right\}.
\nonumber
\end{align}
Then, we can write the average probability of error as
\begin{eqnarray}
P_e^{(n)} & = & \Pr \left( \cE | W=1 \right) \nonumber \\
& = & \Pr \left( E_1^c \cup E_2 \cup E_3 \dots E_{2^{nR}} | W=1 \right) \label{eq:average-prob-err-4} \\
& \leq & \Pr \left( E_1^c | W=1 \right) + \sum_{i=2}^{2^{nR}} \Pr \left( E_i | W=1 \right) \label{eq:average-prob-err-5} \\
& \leq & \epsilon + \sum_{i=2}^{2^{nR}} 2^{-n\left( I\left( U;Y \right) - 3\ep \right)}, \label{eq:average-prob-err-6}
\end{eqnarray}
where \eqref{eq:average-prob-err-4} follows since an error occurs only if either the transmitted codeword is not jointly typical with the received sequence, i.e., $E_1^c$ occurs, or the received sequence is jointly typical with a wrong codeword, i.e., $E_i$ occurs for $i \in \{2, 3, \dots, 2^{nR}\}$. Also \eqref{eq:average-prob-err-5} follows from the union bound and \eqref{eq:average-prob-err-6} follows from the joint AEP theorem \cite{cover} since $\bu^n(i)$ and $\bu^n(1)$ are independent for $i \neq 1$.

Finally we can write
\begin{eqnarray*}
P_e^{(n)} & \leq & \epsilon + \sum_{i=2}^{2^{nR}} 2^{-n\left( I\left( U;Y \right) - 3\ep \right)} \\
& = & \epsilon + (2^{nR}-1)2^{-n\left( I\left( U;Y \right) - 3\ep \right)} \\
& \leq & \epsilon + 2^{-n\left( I\left( U;Y \right) - 3\ep - R \right)} \\
& \leq & 2\epsilon,\label{eq:average-prob-err-7}
\end{eqnarray*}
for sufficiently large $n$ and $R < I\left( U;Y \right) - 3\ep$. Since for every rate $R < I\left( U;Y \right)$,
we can find $\ep > 0$ and a sufficiently large $n$ such that \eqref{eq:average-prob-err-7} holds. Combining
\eqref{eq:average-prob-err-7} and \eqref{eq:eq-3}, we have that for any given $\alpha$ such that $P_\alpha$ is
non-empty and for every rate $R < C^{(I)} \left( \alpha \right)$, the pair $(R,\alpha)$ is achievable. Note that as
mentioned in Remark~\ref{rem:well-defined-capacity}, we can choose $p\left(u,x\right)$ from the compact set $P_\alpha$
so as to maximize $I\left( U;Y \right)$ and then the condition $R < I\left( U;Y \right)$ can be replaced by
the achievability condition $R < C^{(I)} \left( \alpha \right)$.
\end{proof}
We next provide the proof of the converse statement of Theorem~\ref{thm:main-theorem}.

\subsection{Converse}
\label{ssec:converse}
In this section, we introduce a theorem providing the converse proof of the Theorem~\ref{thm:main-theorem} by showing 
that if a communication rate is achievable, then this rate should be below the information secure MACC capacity, i.e., $C^{(I)}$. The following
theorem emphasizes that the information secure MACC capacity constitutes an upper bound for the rate of  secure and reliable
communication. In the content identification problem, this provides the fundamental limit of successful anti-piracy search with a
certain level of privacy.
\begin{theorem}
({\em Converse})
For any given $\alpha$ such that $P_\alpha$ is non-empty and for any $\left( 2^{nR}, n \right)$
$\alpha$-secure MACC with $\lambda^{(n)} \rightarrow 0$, we have $R<C^{(I)} \left( \alpha \right)$.
\label{thm:converse}
\end{theorem}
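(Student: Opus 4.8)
The plan is to follow the standard Fano-inequality converse for channel coding, adapting it to the asymmetric-codebook setting. First I would fix an $\alpha$-secure $\left( 2^{nR}, n \right)$ MACC with $\lambda^{(n)} \to 0$. Since $W$ is uniform on $\cW$, $H\left( W \right) = nR$. The key chain of inequalities starts from $nR = H\left( W \right) = H\left( W \,|\, \bY^n \right) + I\left( W ; \bY^n \right)$. By Fano's inequality, $H\left( W \,|\, \bY^n \right) \leq 1 + P_e^{(n)} nR =: n\epsilon_n$ with $\epsilon_n \to 0$ as $n \to \infty$ (using $\lambda^{(n)} \to 0 \Rightarrow P_e^{(n)} \to 0$). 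So $nR \leq n\epsilon_n + I\left( W ; \bY^n \right)$.

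Next I would bound $I\left( W ; \bY^n \right)$. The crucial structural fact is that the decoder codeword $\bU^n = \bu^n\left( W \right)$ is a deterministic function of $W$, and — because $g$ acts on $\bY^n$ and the whole decoding is through the decoder codebook — the relevant Markov structure is $W \leftrightarrow \bU^n \leftrightarrow \bY^n$. Actually one should be slightly careful: $\bY^n$ depends on $\bX^n = \bx^n\left( W \right)$, not directly on $\bU^n$. But since both $\bX^n$ and $\bU^n$ are deterministic functions of $W$, we have $I\left( W ; \bY^n \right) = I\left( \bU^n ; \bY^n \right)$ once we argue $H\left( W \,|\, \bU^n \right)$ contributes nothing extra — more precisely, $I\left( W; \bY^n \right) \le I\left( \bX^n ; \bY^n \right)$ by the data-processing inequality, and we want to get this in terms of $I\left( \bU^n ; \bY^n \right)$ instead. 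The clean route: decode using $\bU^n$, i.e., view the effective "channel" as $\bU^n \to \bY^n$ with transition law $p\left( \by^n \,|\, \bu^n \right)$; then $I\left( W ; \bY^n \right) \leq I\left( \bU^n ; \bY^n \right)$ since $W \to \bU^n \to \bY^n$ is not quite Markov, but $W$ determines $\bU^n$ so $I(W;\bY^n) \le H(\bU^n) $ and also $I(W;\bY^n)\le I(\bX^n;\bY^n)$; one then single-letterizes $I\left( \bU^n ; \bY^n \right)$.

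Then I would single-letterize: using Lemma~\ref{lem:lemma-1}, the pairs $\left\{ \left( U_i, Y_i \right) \right\}_{i=1}^n$ are i.i.d. with joint law $p\left( u, y \right) = \sum_x p\left( x \right) p\left( y \,|\, x \right) p\left( u \,|\, x \right)$. Hence $I\left( \bU^n ; \bY^n \right) \leq \sum_{i=1}^n I\left( U_i ; Y_i \right) = n\, I\left( U ; Y \right)$ (the last equality by i.i.d.; in general $\le$ suffices, with equality under the MACC construction). Combining, $R \leq \epsilon_n + I\left( U ; Y \right)$ for the joint p.m.f. $p\left( u, x \right)$ induced by the code. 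Finally, the security constraint: the code is $\alpha$-secure, so $\min_{\pi^n} \mbox{E}\left[ d_n\left( \pi^n\left( \bU^n \right), \bX^n \right) \right] \geq \alpha$; by Lemma~\ref{lem:seperable-estimation} this equals $\frac{1}{n}\sum_i \min_\pi \mbox{E}\left[ d\left( \pi\left( U_i \right), X_i \right) \right]$, and since the code is memoryless all terms are equal, giving $\min_\pi \mbox{E}\left[ d\left( \pi\left( U \right), X \right) \right] \geq \alpha$, i.e., $p\left( u, x \right) \in P_\alpha$. Therefore $I\left( U ; Y \right) \leq \max_{p\left( u,x \right) \in P_\alpha} I\left( U ; Y \right) = C^{(I)}\left( \alpha \right)$, so $R \leq \epsilon_n + C^{(I)}\left( \alpha \right)$; letting $n \to \infty$ yields $R \leq C^{(I)}\left( \alpha \right)$.

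The main obstacle I anticipate is the bookkeeping around which random variable "carries" the information from $W$ to $\bY^n$: because the encoder and decoder codebooks are distinct ($\bX^n$ drives the channel but we must express capacity via $\bU^n$), one needs the chain $I\left( W; \bY^n \right) \le I\left( \bU^n; \bY^n \right)$, which is not an immediate data-processing step since $\bU^n - W - \bX^n - \bY^n$ has $\bU^n$ on the "wrong" side. The resolution is that $\bU^n$ and $W$ are in bijection-like correspondence only generically; the safe argument is $I\left( W; \bY^n \right) \le I\left( \bX^n; \bY^n \right)$ by data processing, and then to relate $I\left( \bX^n; \bY^n \right)$ to the single-letter quantity $I\left( U; Y \right)$ one must instead argue directly at the single-letter level using the Markov chain $U_i \leftrightarrow X_i \leftrightarrow Y_i$ together with jointly-typical decoding success — equivalently, observe that reliable decoding from $\bY^n$ via the $\bU^n$-codebook forces $R$ below the mutual information $I\left( U; Y \right)$ of the effective channel $p\left( \by^n | \bu^n \right) = \prod_i p\left( y_i | u_i \right)$, which is exactly the content of Lemma~\ref{lem:lemma-1}. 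Handling this cleanly — rather than hand-waving the asymmetry — is the delicate part; the security and single-letterization steps are then routine.
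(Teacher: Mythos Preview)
Your proposal correctly identifies the essential structure---Fano's inequality, single-letterization via Lemma~\ref{lem:lemma-1}, and the security constraint via Lemma~\ref{lem:seperable-estimation}---and, crucially, you also correctly identify the one genuine obstacle: passing from $I\left(W;\bY^n\right)$ (or $I\left(W;\hW\right)$) to $I\left(\bU^n;\bY^n\right)$. However, you do not actually resolve it. The alternatives you float do not work: $I\left(W;\bY^n\right)\le I\left(\bX^n;\bY^n\right)$ lands you at the wrong single-letter quantity $I\left(X;Y\right)$, and the appeal to ``reliable decoding via the $\bU^n$-codebook forces $R<I\left(U;Y\right)$'' is an achievability-flavored heuristic, not a converse argument---the decoder $g^n$ is an arbitrary map $\cY^n\to\cW$, and nothing in the converse hypothesis restricts how it uses $\bY^n$.

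The paper closes this gap with exactly the observation you mention and then set aside: it proves (Lemma~\ref{lem:lemma-2}) that for any MACC with $\lambda^{(n)}\to 0$, the decoder codebook map $h^n$ is \emph{necessarily one-to-one}. The reason is operational: the decoder knows only $\cC_U=\{h^n(w)\}_w$ and $\bY^n$; if $h^n(w_1)=h^n(w_2)$ for $w_1\neq w_2$, the decoder cannot distinguish them, so $\max\{\lambda_{w_1},\lambda_{w_2}\}$ is bounded away from $0$. Once $h^n$ is injective, $W$ is a function of $\bU^n$, the chain $W\leftrightarrow \bU^n\leftrightarrow \bY^n\leftrightarrow \hW$ is a genuine Markov chain, and data processing gives $I\left(\hW;W\right)\le I\left(\bU^n;\bY^n\right)$. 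After that, your single-letterization and security steps go through verbatim. So the fix is short, but it is not optional: without injectivity of $h^n$, the Markov chain $W\leftrightarrow \bU^n\leftrightarrow \bY^n$ can fail (two messages with the same $\bu^n$ but different $\bx^n$), and the inequality $I\left(W;\bY^n\right)\le I\left(\bU^n;\bY^n\right)$ need not hold.
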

\begin{proof}
We begin the proof with a lemma that states the necessity of one-to-one property of the deterministic decoder
codebook generator function $h^n$ for error-free MACCs.
\begin{lemma}
For any given $\left( 2^{nR} , n \right)$ MACC $\left( \cW, f^n, h^n, g^n \right)$ with $\lambda^{(n)} \rightarrow 0$,
$h^n$ is necessarily a one-to-one mapping.
\label{lem:lemma-2}
\end{lemma}
\begin{proof}
Following the proof in Appendix III of \cite{Alt08}, the proof follows.
\end{proof}
Lemma~\ref{lem:lemma-2} holds for general MACCs, hence it also applies for our setup, i.e.,
i.i.d. MACCs.

Since the transmitted message, i.e., $W$, and the communication channel
output, i.e., $\bY^n$, have a joint distribution and the decoder output, i.e., $\hW$, is a function of $\bY^n$,
then $W$, $\bY^n$ and $\hW$ form a Markov chain in the specified order, i.e.,
$W \leftrightarrow \bY^n \leftrightarrow \hW$.
Similarly, since $h^n \left( W \right) = \bu^n \left( W \right)$ is a function of $W$, then
$\bu^n(W)$, $W$ and $\bY^n$ form a Markov chain
in the that order, i.e., $\bU^n \leftrightarrow W \leftrightarrow \bY^n$, where $\bU^n$ denotes $\bu^n \left( W \right)$
notational simplicity. Combining these two Markov chains
yields that $\bU^n$, $W$, $\bY^n$ and $\hW$ form a Markov chain in the specified order, i.e.,
$\bU^n \leftrightarrow W \leftrightarrow \bY^n \leftrightarrow \hW$. Lemma~\ref{lem:lemma-2}
yields that $h^n \left( \cdot \right)$ is one-to-one, which further yields from the previous Markov chain that
$W$, $\bU^n$, $\bY^n$ and $\hW$ form a Markov chain in that order, i.e.,
$W \leftrightarrow \bU^n \leftrightarrow  \bY^n \leftrightarrow \hW$.

We continue to examine the pair $\bU^n$, $\bY^n$ such that
\begin{eqnarray}
p \left( \by^n | \bu^n \right) & = & \frac{p \left( \by^n, \bu^n \right)}{p \left( \bu^n \right)} \nonumber \\
 & = & \frac{\prod_{i=1}^n p \left( y_i, u_i \right)}{p \left( \bu^n \right)} \label{eq:discrete-case-converse-0.25} \\
 & = & \frac{\prod_{i=1}^n p \left( y_i, u_i \right)}{\prod_{i=1}^n \left[ \sum_{x_i} p(u_i|x_i) p(x_i) \right]} \label{eq:discrete-case-converse-0.5} \\
 & = & \prod_{i=1}^n p(y_i|u_i), \label{eq:discrete-case-converse-0.75}
\end{eqnarray}
where \eqref{eq:discrete-case-converse-0.25} follows from Lemma~\ref{lem:lemma-1}, \eqref{eq:discrete-case-converse-0.5} follows
from the definition of MACC and that
$p\left(\bx^n\right) = \prod_{i=1}^n p(x_i)$ and
\eqref{eq:discrete-case-converse-0.75} follows from
combining  $p(y,u) = \sum_{x_i} p(y|x)p(u|x)p(x)$
and  $p(y_i | u_i) = \frac{p \left( y_i, u_i \right)}{\sum_{x_i} p(u_i|x_i) p(x_i)}$
due to Bayes' rule.

Since the given MACC is $\alpha$-secure, we have
\begin{eqnarray}
	\alpha & \leq & \min_{ \pi_n \, : \, \cU^n \rightarrow \cX^n} \mbox{E} \left[ d_n \left( \pi^n \left( \bU^n \right) , \bX^n \right) \right] \nonumber \\
	& = & \frac{1}{n} \sum_{i=1}^{n} \underset{\pi \, : \, \cU \rightarrow \cX }{\min} \mbox{E}\left[d\left(\pi\left(U_i\right),X_i\right)\right] \label{eq:security-eq-1} \\
	& = & \mbox{E}\left[d\left(\pi\left(U_i\right),X_i\right)\right], \forall \, i \in \{1, 2, \ldots ,n\} ,\label{eq:security-eq-2}
\end{eqnarray}
where \eqref{eq:security-eq-1} follows from Lemma~\ref{lem:seperable-estimation}.

While we have \eqref{eq:discrete-case-converse-0.75}, \eqref{eq:security-eq-2} and the Markov chain,
$W \leftrightarrow \bU^n \leftrightarrow  \bY^n \leftrightarrow \hW$, we now continue with the
following chain of inequalities
\begin{eqnarray}
nR & = & H\left(W\right) \label{eq:discrete-case-converse-1} \\
 & = & I\left(\hW; W\right) + H\left(W | \hW\right), \nonumber \\
 & \leq & I\left(\bU^n; \bY^n\right) + \left( 1 + nR P_e^{\left(n\right)}\right) \label{eq:discrete-case-converse-2} \\
 & = & H\left( \bY^n \right) - \sum_{i=1}^n H\left( Y_i | U_i\right) + \left( 1 + nR P_e^{(n)}\right) \label{eq:discrete-case-converse-3} \\
 & = & \left( 1 + nR P_e^{(n)}\right) + \sum_{i=1}^n \left( H(Y_i) - H(Y_i | U_i) \right) \label{eq:discrete-case-converse-4} \\
 & = & \left( 1 + nR P_e^{(n)}\right) + \sum_{i=1}^n I(U_i ; Y_i) \label{eq:discrete-case-converse-5} \\
 & \leq & \left( 1 + nR P_e^{(n)}\right) + \sum_{i=1}^n C^{(I)}[\min_{\pi \, : \, \cU \rightarrow \cX} E(d_{e}(\pi(U_i) , X_i))] \label{eq:discrete-case-converse-6} \\
& \leq & \left( 1 + nR P_e^{(n)}\right) + \sum_{i=1}^n C^{(I)}(\alpha) \label{eq:discrete-case-converse-7} \\
& = & \left( 1 + nR P_e^{(n)}\right) + nC^{(I)}(\alpha), \label{eq:discrete-case-converse-8}
\end{eqnarray}
where \eqref{eq:discrete-case-converse-1} follows since $W$ is
uniformly distributed over $\cW$, \\
\eqref{eq:discrete-case-converse-2} follows using Fano's inequality (the second term)
and the data processing inequality  (the first term) by recalling that
$W \leftrightarrow \bU^n \leftrightarrow \bY^n \leftrightarrow \hW$
forms a Markov chain in the specified order, \\
\eqref{eq:discrete-case-converse-3} follows using \eqref{eq:discrete-case-converse-0.75},\\
\eqref{eq:discrete-case-converse-4} follows since $p \left( \by^n
\right) = \prod_{i=1}^n \sum_{x_i} p(y_i | x_i) p(x_i) = \prod_{i=1}^n p \left( y_i \right)$, the communication channel is memoryless and $p \left(
\bx^n \right) = \prod_{i=1}^n p (x_i)$, which implies that
$H\left(\bY^n\right)  = \sum_{i=1}^n H(Y_i)$,\\
\eqref{eq:discrete-case-converse-5} follows using the definition of
mutual information, \\
\eqref{eq:discrete-case-converse-6} follows from the definition of information secure MACC capacity,\\
\eqref{eq:discrete-case-converse-7} follows from \eqref{eq:security-eq-2}
and  $C^{(I)}\left(\alpha \right)$ is a nonincreasing function of $\alpha$,

Using \eqref{eq:discrete-case-converse-8} and noting that $\lambda^{(n)} \rightarrow 0$ implies $P_e^{(n)} \rightarrow 0$, we have
\begin{eqnarray}
	R & \leq & \frac{1}{n} + RP_e^{(n)} + C^{(I)}\left(\alpha\right) \nonumber \\
	& \leq & \ep + C^{(I)}\left(\alpha\right), \label{eq:discrete-case-converse-9}
\end{eqnarray}
for any $\ep > 0$ and sufficiently large $n$, where \eqref{eq:discrete-case-converse-9} follows since $P_e^{(n)} \rightarrow 0$ and $1/n \leq \ep$ for sufficiently large $n$. Therefore \eqref{eq:discrete-case-converse-9} implies $R < C^{(I)}\left(\alpha\right)$, which concludes the proof.
\end{proof}

\section{Binary Alphabet Case With Error Probability Based Security Constraints}
\label{sec:binary}

In this section, we consider a special case of interest where the codewords of the encoder are drawn from a binary alphabet, i.e., $\cX=\cY=\cU=\{0,1\}$, the communication channel is a binary symmetric channel with crossover probability $p_1$ and the perturbation distribution is binary symmetric distribution with parameter $p_2$. We introduce a closed form expression of the information secure MACC capacity, i.e.,
\begin{align}
C^{(I)} \left( \alpha \right) = \max_{p \left( u , x \right) \in P_\alpha} I \left( U ; Y \right) ,
\label{eq:info-capacity}
\end{align}
for this binary alphabet case. For this special case of interest, we assume that the distortion function is a Hamming distortion
\begin{align*}
d(x,\hat{x}) = \genfrac{\{}{\}}{0pt}{}{0 \ \ \mathrm{if} \ \ \ x=\hat{x}}{1 \ \ \mathrm{if} \ \ \ x\neq\hat{x}}
\end{align*}
which is a well-known distortion measure extensively used in the literature \cite{cover}. Note that since the Hamming distortion satisfies $\mbox{E} \left[ d \left( \left( \hat{X} \right) , X \right) \right] = \mathrm{Pr} \left ( \hat{X} \neq X \right)$, then the security constraint $\mbox{E} \left[ d \left( \left( \hat{X} \right) , X \right) \right] \geq \alpha$ becomes $\mathrm{Pr} \left ( \hat{X} \neq X \right)$. Hence the the definition of $P_{\alpha}$ becomes
\begin{align}
P_\alpha \eqdef \left\{ p \left( u , x \right) \, \Big| \, \min_{\pi \, : \, \cU \rightarrow \cX}
\mathrm{Pr} \left ( \pi \left( U \right) \neq X \right) \geq \alpha \right\}.
\label{eq:bin_palpha}
\end{align}
Furthermore, given $\left(U , X \right) \sim p\left( u, x \right)$, the estimator $\hat{X}_{\mathrm{MAP}} \eqdef \underset{x}{\arg\max}\, p \left( x | u \right) = \underset{x}{\arg\max}\, p \left(u | x \right)p \left( x \right)$ minimizes $\mathrm{Pr} \left ( \hat{X} \left( U \right) \neq X \right)$ in \eqref{eq:bin_palpha} and known as the MAP estimator. Then the security constraint $p(u,x) \in P_\alpha$ in \eqref{eq:info-capacity} becomes
\begin{align}
\mathrm{Pr} \left ( \hat{X}_{\mathrm{MAP}} \neq X \right) \geq \alpha.
\label{eq:sec-cons-1}
\end{align}
Note that in order to make the problem valid and meaningful, we necessarily need to have $\min\{p_2,1-p_2\} \geq \alpha$, otherwise one of the trivial estimators $\hat{X} = U$ and $\hat{X} = U \oplus 1$ violates the security constraint.

For notational clarity, let us denote
\begin{align*}
p \left( u=0 , x=0 \right) = \gamma_0,\,\,\,\,\,\, p \left( u=0 , x=1 \right) = \gamma_1,\\
p \left( u=1 , x=0 \right) = \beta_0,\,\,\,\,\,\, p \left( u=1 , x=1 \right) = \beta_1,
\end{align*}
where $\gamma_0 + \gamma_1 + \beta_0 + \beta_1 = 1$. Then clearly, if $U = 0$, then
\begin{align*}
\hat{X}_{\mathrm{MAP}} = \genfrac{\{}{\}}{0pt}{}{1 \ \ \mathrm{if} \ \ \gamma_0 < \gamma_1}{0 \ \ \mathrm{if} \ \ \gamma_0 > \gamma_1}
\end{align*}
and if $U = 1$, then
\begin{align*}
\hat{X}_{\mathrm{MAP}} = \genfrac{\{}{\}}{0pt}{}{1 \ \ \mathrm{if} \ \ \beta_0 < \beta_1}{0 \ \ \mathrm{if} \ \ \beta_0 > \beta_1}.
\end{align*}
Hence we have
\begin{eqnarray}
\mathrm{Pr} \left ( \hat{X}_{\mathrm{MAP}} \neq X \right) &=& \mathrm{Pr} \left( \hat{X}_{\mathrm{MAP}} \neq X, U=0 \right) + \mathrm{Pr} \left( \hat{X}_{\mathrm{MAP}} \neq X, U=1 \right) \nonumber \\
&=& \min\{ \gamma_0,\gamma_1\} + \min\{ \beta_0,\beta_1\}. \label{eq:sec-cons-2}
\end{eqnarray}
Combining \eqref{eq:sec-cons-2} and the security constraint \eqref{eq:sec-cons-1}, information secure MACC capacity in \eqref{eq:info-capacity} becomes
\begin{align}
C^{(I)} \left( \alpha \right) = \max_{\min\{ \gamma_0,\gamma_1\} + \min\{ \beta_0,\beta_1\} \geq \alpha} I \left( U ; Y \right) .	
\label{eq:info-capacity-2}
\end{align}

We continue with a lemma which is given and proved in \cite{Alt08}, which
states that
\begin{align}
I(U;Y) =  H(X \oplus Z_1 \oplus Z_2) - H(p_1 + p_2 -2 p_1 p_2),
\label{eq:alt_info}
\end{align}
where $Z_1$ and $Z_2$ are two binary random variables with
$Pr \left ( Z_1 = 1 \right) = p_1$, $Pr \left ( Z_1 = 0 \right) = 1 - p_1$ and $Pr \left ( Z_2 = 1 \right) = p_2$,
$Pr \left ( Z_2 = 0 \right) = 1 - p_2$ and $H \left( p \right)$ is the {\em binary entropy} function (with an abuse of notation), i.e., $H \left( p \right)
\eqdef - p \log p - \left( 1 - p \right) \log \left( 1 - p \right)$ for $p \in \left[ 0 , 1 \right]$. Combining \eqref{eq:alt_info} and \eqref{eq:info-capacity-2}, we next find a closed form expression of the information secure MACC capacity. Note that the information secure MACC capacity is the maximum mutual information $I(U;Y)$, where the maximization carried over $p(u,x)$ satisfying $\min\{ \gamma_0,\gamma_1\} + \min\{ \beta_0,\beta_1\} \geq \alpha$. Here, instead of analytically maximizing the mutual information, we find an upper bound on and prove that the upper bound is achievable. Since the binary entropy is upper bounded by $1$, we have
\begin{eqnarray}
I(U;Y) &=&  H(X \oplus Z_1 \oplus Z_2) - H(p_1 + p_2 -2 p_1 p_2),\\
&\leq& 1 - H(p_1 + p_2 -2 p_1 p_2).\label{eq:max-info}
\end{eqnarray}
To achieve the equality in \eqref{eq:max-info}, we choose $X$ as bernoulli $\frac{1}{2}$, hence we guarantee $X \oplus Z_1 \oplus Z_2$
to be bernoulli $\frac{1}{2}$, i.e., $H(X \oplus Z_1 \oplus Z_2) = 1$ and achieve the equality. By choosing the joint p.m.f. $p(u,x)$ as
\begin{align}
\gamma_0 = \beta_1 = \frac{1- p_2}{2} \nonumber\\
\gamma_1 = \beta_0 = \frac{p_2}{2},
\label{eq:ach-pmf}
\end{align}
we obtain
\begin{align}
p \left( x=0 \right) = p \left( u=0 , x=0 \right) + p \left( u=1 , x=0 \right) = \gamma_0 + \beta_0 = \frac{1- p_2}{2} + \frac{p_2}{2} = \frac{1}{2} \nonumber\\
p \left( x=1 \right) = p \left( u=0 , x=1 \right) + p \left( u=1 , x=1 \right) = \gamma_1 + \beta_1 = \frac{p_2}{2} + \frac{1 - p_2}{2} = \frac{1}{2}. \label{eq:bernoulli}
\end{align}
Hence we have $X$ to be bernoulli $\frac{1}{2}$. Furthermore \eqref{eq:ach-pmf} yields
\begin{align}
& p \left( u=1 | x=0 \right) = p \left( u=0 | x=1 \right) = 2\beta_0 = 2\gamma_1 = \frac{2p_2}{2} = p_2 \nonumber\\
& p \left( u=0 | x=0 \right) = p \left( u=1 | x=1 \right) = 2*\beta_1 = 2\gamma_0 = \frac{2(1-p_2)}{2} = 1-p_2, \label{eq:pert_dist}
\end{align}
which implies that the perturbation distribution is symmetric with $p_2$. Moreover the security constraint holds since
\begin{align*}
\mathrm{Pr} \left ( \hat{X} \neq X \right) = \min\{ \gamma_0,\gamma_1\} + \min\{ \beta_0,\beta_1\} = \min\{p_2,1-p_2\}
\end{align*}
and we have $\min\{p_2,1-p_2\} \geq \alpha$. Combining \eqref{eq:bernoulli}, \eqref{eq:pert_dist} and $\min\{p_2,1-p_2\} \geq \alpha$,
we conclude that by choosing $p(u,x)$ as in \eqref{eq:ach-pmf}, the information secure MACC capacity in \eqref{eq:info-capacity-2} becomes
\begin{align}
C^{(I)} \left( \alpha \right) = \max_{\min\{p_2,1-p_2\} \geq \alpha} 1 - H(p_1 + p_2 -2 p_1 p_2).	
\label{eq:info-capacity-3}
\end{align}

\begin{figure}
\centering \epsfxsize 4in
 \epsfbox{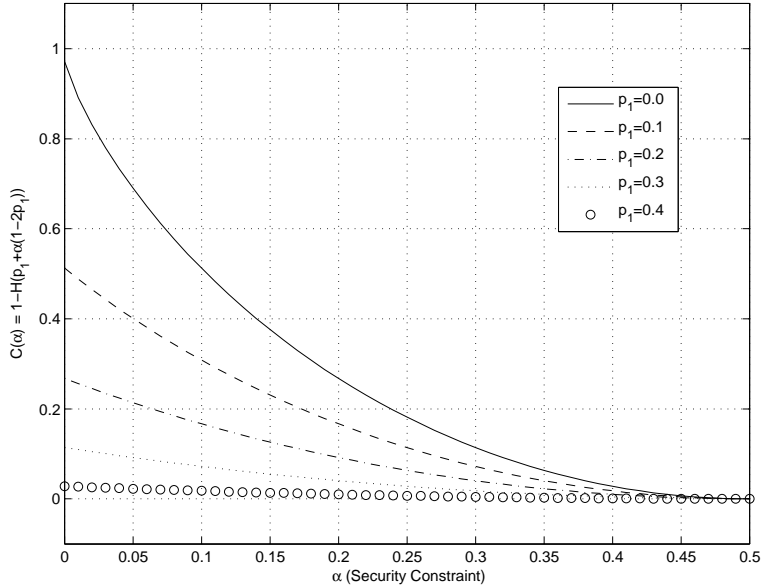}
 \caption{\small The information secure MACC capacity as a function of security constraint $\alpha$ for the binary alphabet case.}
\label{fig:results}
\end{figure}
Since $H(p_1 + p_2 -2 p_1 p_2)$ in \eqref{eq:info-capacity-3} is symmetric around $1/2$, we can assume that $0 \leq p_1,p_2 \leq 1/2$. Note that the binary entropy function $H(p)$ is monotonically increasing for $0 \geq p \leq 1/2$ and also $p_1 + p_2 -2 p_1 p_2$ is monotonically increasing for $0 \geq p_2 \leq 1/2$. Therefore the maximum of the function $1 - H(p_1 + p_2 -2 p_1 p_2)$, under the constraint that $\min\{p_2,1-p_2\} \geq \alpha$, is achieved when $p_2 = \alpha$, hence \eqref{eq:info-capacity-3} becomes
\begin{align}
C^{(I)} \left( \alpha \right) = 1 - H(p_1 + \alpha -2 p_1 \alpha) = 1 - H(p_1 + \alpha(1 - 2 p_1)),
\label{eq:bin-cap}
\end{align}
yielding a closed form expression for the information secure MACC capacity.

The information secure MACC capacity as a function of the security constraint $\alpha$ for various values of
$p_1$ is shown in Fig.~\ref{fig:results}. In consistent with the practical setup, we observe a  trade-off between
the security and the performance of the communication setup. To maximize the security, or $\alpha$ in our case,
one needs to choose $\alpha = \frac{1}{2}$. Then the information secure MACC capacity becomes
$C^{(I)}\left( \frac{1}{2} \right) = 1 - H(p_1 + 1/2 - p_1) = 0$. On the other hand, to maximize the capacity
of the system, regardless of the channel, we have to choose $\alpha = 0$, i.e., ``zero security''.

\section{Conclusions}
\label{sec:conclusions}

In this paper, we studied the content identification problem: (1)
where a rights-holder company desires to keep track of illegal uses of
its commercial content, (2) by utilizing resources of a security
company, (3) while securing the privacy of its content, from an
information theoretic perspective. The content identification is
modelled as a communication problem using a asymmetric codebooks, where
the commercial content of the rights-holder company corresponds to the
codebook of an encoder and the hash values of the content (made
available to the security company) corresponds the codebook of a
decoder. The privacy issue in the content identification is modelled
by adding certain security constraints to this communication setup to
prevent estimation of the encoder codewords given the decoder
codewords. By this modeling, the proposed problem of reliable
communication with asymmetric codebooks with security constraints
provided the fundamental limits of the content identification
problem. Under this framework, we introduced an information capacity
and proved that this capacity is equal to the operation capacity of
the system under i.i.d. encoder codewords, yielding the fundamental
limits for content identification. As a well known and widely studied
framework, we evaluated the capacity for a binary symmetric channel
and provided closed form expressions.

\appendices
\section{Proof of Proposition~\ref{prop:p-alpha-compactness}}
\label{app-1}
\setcounter{equation}{0}
\renewcommand{\theequation}{I-\arabic{equation}}

We begin with the proof of the first property that if $\alpha$ is such that $P_\alpha$ is the empty set, then $C \left( \alpha \right) = 0$, by way of contradiction. Assume that there exists an $\alpha_0$ such that the set $P_{\alpha_0}$ is empty, while $C \left( \alpha_0 \right) > 0$. Hence there exists a rate $R > 0$ such that $(R,\alpha_0)$ pair is achievable. Then, by using the definition of achievability, we have a sequence of
$\left( 2^{nR} , n \right)$ MACC $\left( \cW, f^n, h^n, g^n \right)$ such that
\begin{align}
\lim_{n \rightarrow \infty} \min_{\pi^n \, : \, \cU^n \rightarrow \cX^N}
\mbox{E} \left[ d \left( \pi^n \left( \bU^n \right) , \bX^n \right) \right] \geq \alpha_0.
\label{eq:app-1}
\end{align}
By using the Lemma~\ref{lem:seperable-estimation}, \eqref{eq:app-1} yields
\begin{align}
\lim_{n \rightarrow \infty} \frac{1}{n} \sum_{i=1}^{n} \underset{\pi \, : \, \cU \rightarrow \cX }{\min} \mbox{E}\left[d\left(\pi\left(U_i\right),X_i\right)\right] \geq \alpha_0 .
\label{eq:app-2}
\end{align}
Since the codewords of the encoder and the decoder are realizations of an i.i.d. process, then \eqref{eq:app-2} becomes
\begin{align}
\underset{\pi \, : \, \cU \rightarrow \cX }{\min} \mbox{E}\left[d\left(\pi\left(U\right),X\right)\right] \geq \alpha_0,
\nonumber
\end{align}
which means that $P_{\alpha_0}$ is not empty by definition of the set $P_\alpha$,
which concludes the proof of the first property that if $\alpha$ is such that $P_\alpha$
is the empty set, then $C \left( \alpha \right) = 0$.

To prove that the nonempty set $P_{\alpha}$ is compact, we use Heine-Borel theorem \cite{analysis} which states that any subset $S$ of $\bbR^n$ is compact if and only if it is bounded and closed.
Since $P_{\alpha}$ is the set of joint distributions of the realizations of $X$ and $U$, then its elements obey the rules of probability, hence the set is bounded by $[0,1]^{|\cX \times \cX|}$ for the discrete finite case.
To prove that $P_{\alpha}$ is closed, we prove that for every convergent sequence $p_k \in P_{\alpha}$, the limit lies in $P_{\alpha}$ \cite{analysis}. 

Hence, assume that there exists a convergent sequence $p_k \rightarrow p$ in $P_{\alpha}$. By using the definition of limit, $\forall \epsilon >0$ there exists an $N$ such that $\forall n \geq N$ we have $|p_n - p| < \epsilon$. Note that $p_k(u,x)$ is in $P_{\alpha}$ if and only if
\begin{align}
\lim_{n \rightarrow \infty} \min_{\pi^n \, : \, \cU^n \rightarrow \cX^N}
\mbox{E}_{p_k} \left[ d \left( \pi^n \left( \bU^n \right) , \bX^n \right) \right] \geq \alpha.
\label{eq:app-3}
\end{align}
Furthermore, for a given n-fold codebook estimator function $\pi^n$, we have
\begin{eqnarray}
\mbox{E}_p \left[ d_n \left( \pi^n \left( \bU^n \right) , \bX^n \right) \right]
& = &
\sum_{\bu^n \in \cU^n} \sum_{\bx^n \in \cX^n}
p \left( \bu^n , \bx^n \right) d_n \left( \pi^n \left( \bu^n \right) , \bx^n \right) \nonumber \\
& > & \sum_{\bu^n \in \cU^n} \sum_{\bx^n \in \cX^n}
(p_k \left( \bu^n , \bx^n \right) - \epsilon) d_n \left( \pi^n \left( \bu^n \right) , \bx^n \right) \nonumber \\
& = & \sum_{\bu^n \in \cU^n} \sum_{\bx^n \in \cX^n}
(p_k \left( \bu^n , \bx^n \right) d_n \left( \pi^n \left( \bu^n \right) , \bx^n \right) \nonumber \\
& \, & - \sum_{\bu^n \in \cU^n} \sum_{\bx^n \in \cX^n}
\epsilon d_n \left( \pi^n \left( \bu^n \right) , \bx^n \right) \nonumber \\
& > & \sum_{\bu^n \in \cU^n} \sum_{\bx^n \in \cX^n}
(p_k \left( \bu^n , \bx^n \right) d_n \left( \pi^n \left( \bu^n \right) , \bx^n \right) -
\sum_{\bu^n \in \cU^n} \sum_{\bx^n \in \cX^n}
D \epsilon \label{eq:app-4} \\
& = & \mbox{E}_{p_k} \left[ d_n \left( \pi^n \left( \bU^n \right) , \bX^n \right) \right]  -
\epsilon_0, \label{eq:app-5}
\end{eqnarray}
for all $\epsilon_0 > 0$, where \eqref{eq:app-4} follows since the distortion function is bounded by $D$. We now define
\begin{align}
\pi^n_{p_k} = arg\min_{\pi^n \, : \, \cU^n \rightarrow \cX^N} \mbox{E}_{p_k} \left[ d_n \left( \pi^n \left( \bU^n \right) , \bX^n \right) \right],
\label{eq:app-6}
\end{align}
which is well defined since there are finitely many codebook estimator functions. Then, we have
\begin{eqnarray}
\min_{\pi^n \, : \, \cU^n \rightarrow \cX^N} \mbox{E}_p \left[ d_n \left( \pi^n \left( \bU^n \right) , \bX^n \right) \right]
& = &  \mbox{E}_p \left[ d_n \left( \pi^n_{p} \left( \bU^n \right) , \bX^n \right) \right] \nonumber \\
& > &  \mbox{E}_{p_k} \left[ d_n \left( \pi^n_{p} \left( \bU^n \right) , \bX^n \right) \right]  -
\epsilon \label{eq:app-7} \\
& > & \mbox{E}_{p_k} \left[ d_n \left( \pi^n_{p_k} \left( \bU^n \right) , \bX^n \right) \right]  - \epsilon \label{eq:app-8}\\
& = & \min_{\pi^n \, : \, \cU^n \rightarrow \cX^N} \mbox{E}_{p_k} \left[ d_n \left( \pi^n \left( \bU^n \right) , \bX^n \right) \right] - \epsilon, \label{eq:app-9}
\end{eqnarray}
for all $\epsilon > 0$, where \eqref{eq:app-7} follows from \eqref{eq:app-5} and \eqref{eq:app-8} follows from the definition \eqref{eq:app-6}. Note that \eqref{eq:app-9} holds for any $n>N$, yielding
\begin{eqnarray}
\lim_{n \rightarrow \infty} \min_{\pi^n \, : \, \cU^n \rightarrow \cX^N}
\mbox{E}_p \left[ d \left( \pi^n \left( \bU^n \right) , \bX^n \right) \right]
&\geq &  \lim_{n \rightarrow \infty} \min_{\pi^n \, : \, \cU^n \rightarrow \cX^N}
\mbox{E}_{p_k} \left[ d \left( \pi^n \left( \bU^n \right) , \bX^n \right) \right] \nonumber \\
& \geq & \alpha, \label{eq:app-10}
\end{eqnarray}
where \eqref{eq:app-10} follows from \eqref{eq:app-3}. Hence, we proved that the limit lies in $P_{\alpha}$, i.e., $p \in P_{\alpha}$, and the set $P_{\alpha}$ is closed. Then, the proof follows from the Heine-Borel Theorem \cite{analysis}.
\qed

\bibliographystyle{IEEEbib}

\bibliography{msaf_references}

\end{document}